\newcommand{\rtn}{\mathbb{R}}
\newcommand{\kf}{\mathcal{K}}
\newcommand{\kl}{\mathcal{KL}}
\newcommand\BibTeX{{\rmfamily B\kern-.05em \textsc{i\kern-.025em b}\kern-.08em
T\kern-.1667em\lower.7ex\hbox{E}\kern-.125emX}}
\begin{document}

\newtheorem{theorem}{Theorem}
\newtheorem{assumption}{Assumption}
\newtheorem{remark}{Remark}
\newtheorem{definition}{Definition}
\newtheorem{lemma}{Lemma}
\newtheorem{corollary}{Corollary}
\newtheorem{proposition}{Proposition}
\newtheorem{property}{Property}
\runningheads{Z.ZENG,Other}{A demonstration of the \journalabb\
class file}
\makeatletter
\title{Convergence Analysis using the Edge Laplacian: Robust Consensus of Nonlinear Multi-agent Systems via ISS Method }

\author{Zhiwen Zeng\affil{1}, Xiangke Wang\affil{2}\corrauth , Zhiqiang Zheng\affil{3}}

\address{Mechatronics and Automation School of National University of Defense Technology, Changsha, 410073, China}

\corraddr{Mechatronics and Automation School of National University of Defense Technology, Changsha, 410073, China. \\
E-mail: xkwang@nudt.edu.cn
}

\begin{abstract}
This study develops an original and innovative matrix representation with respect to the information flow for networked multi-agent system. To begin with, the general concepts of the edge Laplacian of digraph are proposed with its algebraic properties. Benefit from this novel graph-theoretic tool, we can build a bridge between the consensus problem and the edge agreement problem; we also show that the edge Laplacian sheds a new light on solving the leaderless consensus problem. Based on the edge agreement framework, the technical challenges caused by unknown but bounded disturbances and inherently nonlinear dynamics can be well handled. In particular, we design an integrated procedure for a new robust consensus protocol that is based
on a blend of algebraic graph theory and the newly developed cyclic-small-gain theorem. Besides, to highlight the intricate relationship between the original graph and cyclic-small-gain theorem, the concept of edge-interconnection graph is introduced for the first time. Finally, simulation results are provided to verify the theoretical analysis.
\end{abstract}
\keywords{Edge Laplacian; edge agreement; leaderless consensus; nonlinear dynamics; small gain}
\maketitle
\section{Introduction}
\vspace{-2pt}

Recent decades have witnessed tremendous interest in investigating the distributed coordination of multi-agent systems due to broad applications in several areas including formation flight \cite{beard2002coordinated}, coordinated robotics \cite{lin2005necessary}, sensor fusion \cite{olfati2007distributed} and distributed computations \cite{nedic2009distributed}. Recently, the consensus, as a critical research problem, has received increasing amounts of attention. To address such issue, the study of individual dynamics, communication topologies and distributed controls plays an important role.

Graph theory contributes significantly to the analysis and synthesis of networked multi-agent systems because it provides general abstractions for how information is shared between agents in a network. Particularly, the graph Laplacian plays an important role in the convergence analysis and has been explored in several contexts \cite{olfati2004consensus}\cite{ren2005consensus}. Although the graph Laplacian is a convenient method to describe the geometric interconnection of networked agents, another attractive notion--the edge agreement, which has not been explored extensively, deserves additional attention because the edges are adopted \ul{as} natural interpretations of the information flow. Pioneering studies on edge agreement protocol provide new insights into how certain subgraphs, such as spanning trees and cycles, affect the convergence properties and set up a novel systematic framework for analyzing multi-agent systems from the edge perspective \cite{zelazo2011edge}\cite{zelazo2011graph}\cite{zelazo2013performance}. In these literatures, an important matrix representation is introduced, which is referred to as \emph{edge Laplacian}. The edge agreement in \cite{zelazo2011edge} provides a theoretical analysis of the system's performance using both $H_2$ and $H_\infty$ norms, and these results are applied in relative sensing networks, referring to \cite{zelazo2011graph}. Furthermore, based on the properties of the edge Laplacian, \cite{zelazo2013performance} examines how cycles impact the $H_2$ performance and proposes an optimal strategy for designing consensus networks. It is worth noting that, in above-mentioned studies, the edge Laplacian representation is valid for undirected graphs, rather than the more general directed case. Considering the numerous applications of digraphs in multi-agent coordination, extending such a graph tool to the directed case is of great interest.


The information exchange between agents cannot be accurate in practical applications because it is inevitable that types of external noises exist in sensing and transmission\ul{;} and the existence of perturbations always leads to instability and performance deterioration. Therefore, it is significant to investigate their effects on the behavior of multi-agent systems and design a robust consensus protocol to improve disturbance rejection properties. In \cite{bauso2009consensus}, the concept of $\varepsilon$-consensus for an undirected graph is introduced in which the state errors are required to converge in a small region under unknown but bounded external disturbances. In the presence of external disturbances, \cite{wen2012consensus} guarantees a finite $L_2$-gain performance under strongly connected graphs. The global consensus with guaranteed $H_\infty$ performance can be achieved under a strongly connected graph in \cite{li2012global}, where a Lipschitz continuous multi-agent system subject to external disturbances is considered. Motivated by above observations, we consider a more general directed communication graph containing a spanning tree and has unknown but bounded disturbances in the neighbor’s state feedback.

Recently, researchers have increasingly interested in consensus for multi-agent systems with nonlinear dynamics because most of the physical systems are inherently nonlinear in nature. In \cite{das2011cooperative}, an adaptive pinning control method is introduced to study the synchronization of uncertain nonlinear networked systems. The second-order consensus of multi-agent systems with heterogeneous nonlinear dynamics and time-varying delays is investigated in \cite{zhu2010leader}. In \cite{wang2012formation}, a distributed control law for the formation tracking of a multi-agent system that is governed by locally Lipschitz-continuous dynamics under a directed topology is developed. It should be noted that the studies mentioned above are all in the leader-follower setting; however, we will investigate a substantially challenging problem in this paper, i.e., the leaderless consensus problem for multi-agent systems with inherently nonlinear dynamics. To the best of our knowledge, most of existing studies focus on solving the nonlinear leaderless consensus under undirected graphs \cite{jie2012containment}\cite{mei2013distributed}\cite{ren2009distributed}, while a few considered the directed topology. Only a small number of studies use the concepts of generalized algebraic connectivity \cite{yu2010second}, nonsmooth analysis based on the study of Moreau \cite{lin2007state} and the limit-set-based approach \cite{shi2009global} to solve the state agreement problem with nonlinear multi-agent systems under digraphs. However, due to the fact of these methods' extremely complicated characters, it is difficult to promote them in the practice. The edge agreement model allows the development of a universal solution framework to address these problems.

To address the technical challenges caused by the external disturbances and the inherently nonlinear dynamics, the concept of input-to-state stability (ISS), which reveals how external inputs affect the internal stability of nonlinear systems (see \cite{sontag2008input} for a tutorial), is used in this paper. Actually, the concept of ISS is of significance in analyzing large-scale systems, and considerable efforts have been devoted to the interconnected ISS nonlinear systems; particularly, the general cyclic-small-gain theorem of ISS systems developed in \cite{liu2011lyapunov}. Most recently, the nonlinear small-gain design methods, especially the cyclic-small-gain approach, are utilized to design new distributed control strategies for flocking and containment control in \cite{wang2014coordination} as well as to deal with formation control of nonholonomic mobile robots in \cite{liu2013distributed}. In \cite{liu2013outputfeedback}, the authors present a cyclic-small-gain approach to distributed output-feedback control of nonlinear multi-agent systems.


This paper initially extends the concept of the edge Laplacian to digraphs and explores its algebraic properties. To proceed with a seamless integration of graph theory and ISS designs, we present a new robust consensus protocol induced from the edge agreement model to solve the leaderless consensus problem with unknown but bounded disturbances and inherently nonlinear dynamics. To better comprehend the edge agreement mechanism, both strongly connected and quasi-strongly connected situations are considered in this paper. The contributions of this paper depend on three aspects. First, the edge Laplacian of digraph is proposed as well as the edge \ul{adjacency} matrix. Comparing to our previous works \cite{zeng2014nonlinear}, much more details about the algebraic properties of the edge Laplacian are explored. As a matter of fact, the invertibility of the \ul{incidence} matrix and the spectra properties of the edge Laplacian play a central role in the subsequent analysis. Second, we provide a general framework for analyzing the leaderless consensus problem for digraph based on the edge agreement mechanism. Under such framework, the technical challenges caused by the external disturbances and the inherently nonlinear dynamics can be effectively addressed. In particular, we design an integrated procedure for a new robust consensus protocol which is based on a blend of algebraic graph theory and ISS method. Third, following our setup, the edge-interconnection graph is proposed. One of our primarily goal in this paper is to explicitly highlight the insights that the edge-interconnection structure offers in the analysis and synthesis of multi-agent networks. In this direction, we note a reduced order modeling for the edge agreement in terms of the spanning tree, and based on this observation, a two-subsystem interconnection structure is given.   

The organization of this paper is as follows: In Section 2, a brief overview of the basic concepts and results in graph theory are presented as well as the ISS cyclic-small-gain theorem. The edge Laplacian of digraph and other related concepts are proposed in Section 3. The main content on the robust consensus protocol with inherently nonlinear dynamics is elaborated in Section 4. Numerical simulation results are provided in Section 5. The last section presents the conclusions and proposes a number of future research directions.
\vspace{-6pt}

\section{Basic Concepts and Preliminary Results}\label{sec:basis}

In this section, we present a number of basic concepts in graph theory and the ISS cyclic-small-gain theorem.

\subsection{Graph and Matrix}

In this paper, we use $\left|  \cdot  \right|$ and $\left\|  \cdot  \right\|$ to denote the Euclidean norm and 2-norm for vectors and matrices respectively. The notation $\left\|  \cdot  \right\|_\infty$ is used to denote the supremum norm for a function. Let $\mathcal{R}\left( A \right)$ and $\mathcal{N}\left( A \right)$ denote the range space and null space of matrix $A$.

\hl{Let $\mathcal{G} = \left( {\mathcal{V},\mathcal{E},{A}_\mathcal{G}} \right)$ be a \emph{digraph} of order $N$ with a finite nonempty set of nodes $\mathcal{V}  = \left\{ {1,2, \cdots,N} \right\}$, a set of directed edges $\mathcal{E}  \subseteq \mathcal{V} \times \mathcal{V}$ with size $L$ and a adjacency matrix ${A}_\mathcal{G} = \left[ {\alpha_{ij} } \right] \in \mathbb{R}^{N \times N}$, where $\alpha_{ij} = 1$ if and only if $\left( {i, j} \right) \in \varepsilon$ else $\alpha_{ij}  = 0$.}
The degree matrix $\Delta_\mathcal{G} = \left[\Delta_{ij}\right]$ is a diagonal matrix with $\left[\Delta_{ii}\right] = \sum\nolimits_{j = 1}^N {\alpha _{ij},i = 1,2, \cdots,N}$, being the out-degree of node $i$, and the \emph{graph Laplacian} of the digraph $\mathcal{G}$ is defined by $L_\mathcal{G}=\Delta_\mathcal{G}-{A}_\mathcal{G}$. For a given parent node $i$, its incident edge is called the \emph{parent edge} denoted by $\mathcal{P}^i$; its emergent edge is called the \emph{child edge} denoted by $\mathcal{C}^i$; and the edges derived from the same parent $i$ (\ul{i.e.,} the collection of $\mathcal{C}^i$) are called the \emph{sibling edges} denoted by $\mathcal{E}^i_\mathcal{C}$. We call an edge the neighbor of $e_k$ if they share a node, and the neighbor set of $e_k$ is denoted by $N_{e_k} =  \left\{ e_l\in\mathcal{E}:\text{if $e_k$ and $e_l$ share a node} \right\}$. The  \emph{outgoing neighbors} of $e_k$ refer to  $N_{e_k }^ \otimes = \left\{ {e_l  \in N_{e_k}: \text{ if $e_k$ = $\mathcal{P}^i$ and $e_l = \mathcal{C}^i$; or $e_k, e_l \in  \mathcal{E}^i_\mathcal{C}$, where $i$ is the conjunct node} } \right\}$ (see, e.g., Figure \ref{fig:edge}).

\begin{figure}[hbtp]
\begin{center}
\mbox{
\subfigure[$e_k$ = $\mathcal{P}^i$ and $e_l = \mathcal{C}^i$]
{\includegraphics[height=20 mm]{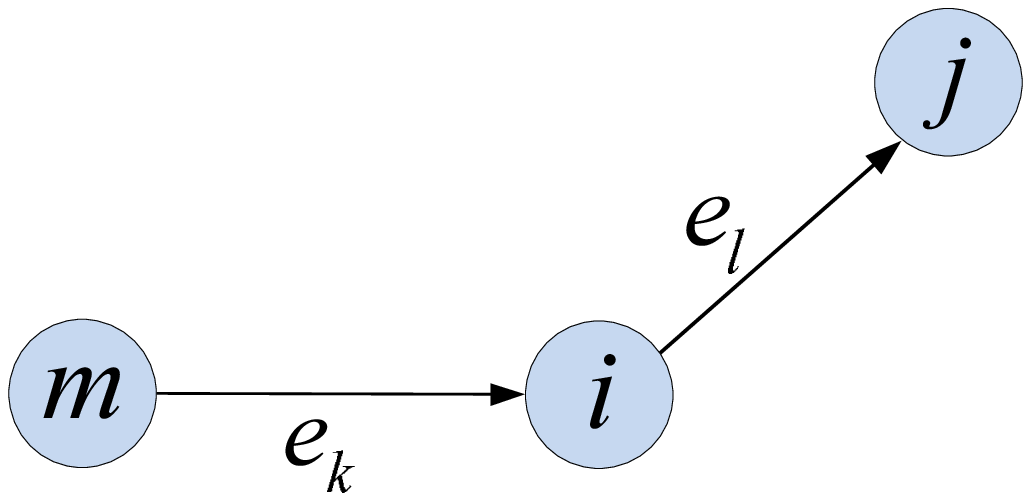}} \hspace{1.0in}
\subfigure[$e_k, e_l \in  \mathcal{E}^i_\mathcal{C}$]
{\includegraphics[height=18 mm]{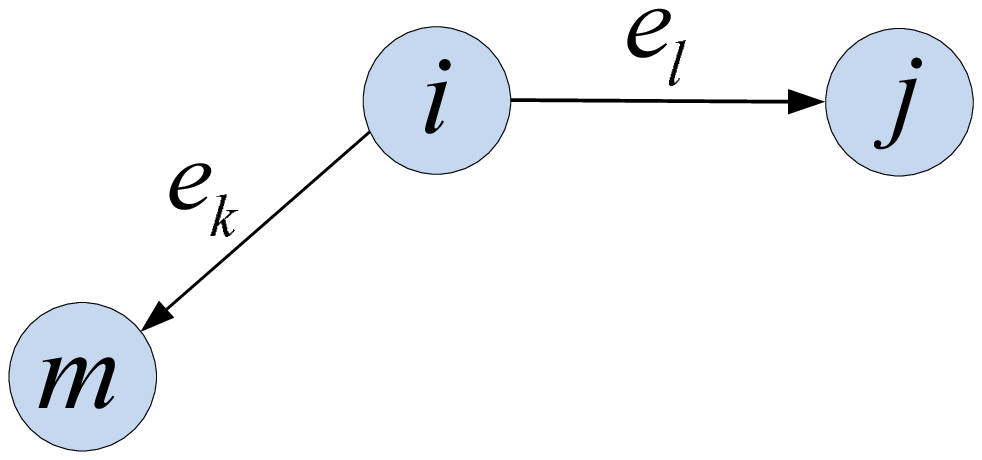}}\quad
}
\caption{ Two typical forms of the outgoing neighbors, $N_{e_k }^ \otimes$.}
\label{fig:edge}
\end{center}
\end{figure}
A directed path in digraph $\mathcal{G}$ is a sequence of directed edges. A directed tree is a digraph in which, for the root $i$ and any other node $j$, there exists exactly one directed path from $i$ to $j$. A spanning tree of a digraph is a directed tree formed by graph edges that connect all the nodes of the graph \cite{godsil2001algebraic}. Graph $\mathcal{G}$ is called strongly connected if and only if any two distinct nodes can be connected via a directed path and quasi-strongly connected if and only if it has a directed spanning tree \cite{thulasiraman2011graphs}.

The incidence matrix $E\left(\mathcal{G} \right)$ for a digraph is a $\left\{ {0, \pm 1} \right\}$-matrix with rows and columns indexed by the vertexes and edges of $\mathcal{G}$, respectively, such that
$$
\left[ {E\left( \mathcal{G} \right)} \right]_{ik} =
\begin{cases}
+1 & \text{if $i$ is the initial node of edge $e_k$} \\
-1 & \text{if $i$ is the terminal node of edge $e_k$}\\
0  & \text{otherwise}
\end{cases}
$$
which implies each column of $E$ contains exactly two nonzero entries ``+1'' and ``-1''. In addition, for a quasi-strongly connected digraph, the rank of the incidence matrix is  $rank(E\left(\mathcal{G} \right)) = N-1$ from \cite{thulasiraman2011graphs}. Figure \ref{figure:exam} depicts an example with its incidence matrix.



\begin{figure}[hbtp]
\centering
{\includegraphics[height=3.0 cm]{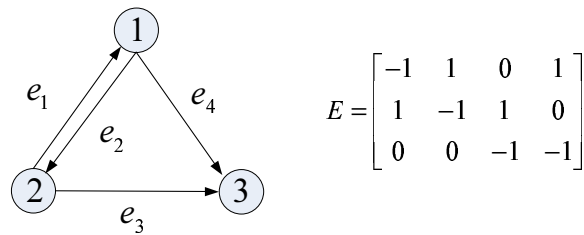}}
\caption{The incidence matrix of a simple digraph.}
\label{figure:exam}
\end{figure}

\subsection{ISS and Small-gain Theorem}

A function $\alpha:\rtn_{+}\to\rtn_+$ is said to be \emph{positive definite} if it is continuous, $\alpha(0)=0$ and $\alpha(s)>0$ for $s>0$. A function $\gamma :{\rtn_ + } \to {\rtn_ + }$ is of class $\kf$ if it is continuous, strictly increasing and $\gamma \left( 0 \right) = 0$; it is of class $\kf_\infty$ if, in addition, it is unbounded. A function $\beta :{\rtn_ + } \times {\rtn_ + } \to {\rtn_ + }$ is of class $\kl$ if, for each fixed $t$, the function $\beta \left( {\cdot,t} \right)$ is of class $\kf$ and for each fixed $s$, the function $\beta \left( {s,\cdot} \right)$ is decreasing and tends to zero at infinity. $\mathrm{Id}$ represents identify function, and symbol $\circ$ denotes the composition between functions.

Consider the following nonlinear system with $x\in\rtn^n$ as the
state and $w\in\rtn^m$ as the external input:
\begin{align} \label{generalizedcontrolledsystem}
     \dot{x}=\zeta(x,w)
\end{align}
where $\zeta:\rtn^n\times\rtn^m\to\rtn^n$ is a locally Lipschitz
vector field.

\begin{definition}[\cite{sontag2008input}] 
The system~\eqref{generalizedcontrolledsystem} is said to be
\emph{input-to-state stable (ISS)} with $w$ as an input if there exist
$\beta\in\kl$ and $\gamma\in\kf$ such that for each initial
condition $x(0)$ and each measurable essentially bounded input
$w(\cdot)$ defined on $[0,\infty)$, the solution $x(\cdot)$ exists
on $[0,\infty)$ and satisfies
\begin{align*}
   |x(t)|\le\beta(|x(0)|,\gamma(\lVert w\rVert_\infty )),~~~\forall t\ge0.
\end{align*}
\end{definition}
It is known that if the system $\dot{x}=f(x,w)$ in
\eqref{generalizedcontrolledsystem} is ISS with $w$ as the input,
then the unforced system $\dot{x}=f(x,0)$ is globally asymptotically
stable at $x=0$.

%

\begin{lemma}[\cite{sontag2008input}] \label{ISSandLyapulov}
The system~\eqref{generalizedcontrolledsystem} is ISS if
and only if it has an ISS-Lyapunov function.
\end{lemma}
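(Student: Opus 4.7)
The plan is to establish both implications, where an ISS-Lyapunov function for~\eqref{generalizedcontrolledsystem} is understood (following Sontag and Wang) as a smooth, proper, positive definite $V:\rtn^n\to\rtn_+$ for which there exist $\alpha_1,\alpha_2\in\kf_\infty$ with $\alpha_1(|x|)\le V(x)\le \alpha_2(|x|)$, together with a dissipation inequality in either of the equivalent forms: (i) there exist $\alpha_3\in\kf_\infty$ and $\rho\in\kf$ with $\nabla V(x)\cdot\zeta(x,w)\le -\alpha_3(|x|)$ whenever $|x|\ge\rho(|w|)$, or (ii) there exist $\alpha\in\kf_\infty$ and $\sigma\in\kf$ with $\nabla V(x)\cdot\zeta(x,w)\le -\alpha(V(x))+\sigma(|w|)$. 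Verifying (i)$\Leftrightarrow$(ii) is a short exercise using compositions with $\alpha_1,\alpha_2$ and provides the bookkeeping used throughout the rest of the argument.

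For the sufficiency direction (ISS-Lyapunov $\Rightarrow$ ISS), I would invoke a comparison argument. Whenever $|x(t)|\ge\rho(\|w\|_\infty)$, form (i) yields $\dot V(x(t))\le -\alpha_3\circ\alpha_2^{-1}(V(x(t)))$, so a standard class-$\kl$ comparison lemma gives $V(x(t))\le\tilde\beta(V(x(0)),t)$ on this regime. Once $|x(t)|<\rho(\|w\|_\infty)$, the same dissipation inequality confines $V(x(t))$ to the sublevel set $\{V\le \alpha_2\circ\rho(\|w\|_\infty)\}$, since at the boundary $\dot V\le 0$. Splicing the two regimes and applying $\alpha_1^{-1}$ produces an estimate of the form $|x(t)|\le \beta(|x(0)|,t)+\gamma(\|w\|_\infty)$ with $\beta\in\kl$, $\gamma\in\kf$, which is, up to writing a sum as a maximum, exactly the ISS bound in the Definition.

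For the necessity direction (ISS $\Rightarrow$ ISS-Lyapunov), which is the substantive part of the lemma, the strategy follows Sontag and Wang. First, re-express the ISS property via two equivalent characterizations: the asymptotic gain property $\limsup_{t\to\infty}|x(t)|\le\tilde\gamma(\|w\|_\infty)$ together with a uniform local stability property, and a robust asymptotic stability statement in which the input is viewed as a bounded disturbance inside a margin that depends on the state. Second, apply a Massera--Kurzweil-type converse Lyapunov theorem to the differential inclusion associated with this robust system, producing a continuous Lyapunov function whose decay rate is uniform over admissible disturbances. Third, smooth this function using Sontag's lemma on $\kf_\infty$ comparison functions, and rescale to fit the bounds $\alpha_1(|x|)\le V(x)\le\alpha_2(|x|)$ and the dissipation form (i).

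The main obstacle is precisely this necessity direction: one must manufacture a \emph{smooth} $V$ from only a $\kl$-bound on trajectories, and the smoothing has to be performed without destroying the uniformity of the decay rate with respect to $w$. Both the preservation of the decay margin across a neighborhood of inputs and the regularization of the resulting comparison functions require the full technical machinery of converse Lyapunov theory; these steps, rather than the comparatively elementary sufficiency direction, account for essentially all of the difficulty in the lemma.
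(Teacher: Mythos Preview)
Your sketch is correct and follows the canonical Sontag--Wang argument that the cited reference \cite{sontag2008input} surveys; however, note that the paper itself does not prove this lemma at all---it is simply stated as a quotation from the literature with no accompanying proof. In that sense your proposal does far more than the paper does: you outline both directions, correctly identify the comparison-lemma route for sufficiency and the converse-Lyapunov (Massera--Kurzweil plus smoothing) machinery for necessity, and accurately flag that the smooth construction of $V$ uniformly in $w$ is the hard part. There is nothing to compare against beyond the bare citation, so your write-up stands on its own as a faithful summary of the standard proof.
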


Consider the following interconnected system composed of $N$
interacting subsystems:
\begin{align} \label{connectedsystem}
 \dot {x}_i=\zeta_i(x,w_i),~~~i=1,\ldots,N
 \end{align}
where $x_i\in\rtn^{n_i}$, $w_i\in\rtn^{m_i}$ and
$\zeta_i:\rtn^{n+m_i}\to\rtn^{n_i}$ with $n=\sum^N_{i=1} n_i$ is locally
Lipschitz continuous such that $x=[x^T_1,\ldots,x^T_N]^T$ is the
unique solution of system~\eqref{connectedsystem} for a given
initial condition. The external input $w=[w^T_1,\ldots,w^T_N]^T$ is
a measurable and locally essentially bounded function from $\rtn_+$
to $\rtn^{m}$ with $m=\sum^N_{i=1} m_i$. Furthermore, \hl{we use $\gamma _y^x \in \kf$ to represent the gain function from $x$-subsystem to $y$-subsystem} \cite{isidori1999nonlinear}.

\begin{lemma}[Cyclic-small-gain Theorem, \cite{liu2011lyapunov}]\label{lem:CSG}
\label{lem:ISScicle} Consider the continuous-time dynamical
network~\eqref{connectedsystem}. Suppose that for $i=1,\ldots,N$,
the $x_i$-subsystem admits an ISS-Lyapunov function $V_i: \rtn^{n_i}\to\rtn$  satisfying
\begin{itemize}
\item  there exist $\underline{a}_i, \overline{a}_i \in \kf_{\infty}$ such that
\begin{align*}
\underline{a}_i(|x_i|)\le
V_i(x_i)\le\overline{a}_i(|x_i|),~~~\forall x_i;
\end{align*}

\item there exist $\gamma_{x_i}^{x_j}\in\kf\cup\{0\}(j\neq i)$, $\gamma_{x_i}^{w_i}\in\kf\cup\{0\}$ and a positive definite $\alpha_i$ such
that
\begin{align*}
& V_i(x_i)\ge\max\{\gamma_{x_i}^{x_j}(V_j(x_j)),\gamma_{x_i}^{w_i}(|w_i|)\} \nonumber\\
\Rightarrow &
\nabla V_i(x_i)\zeta_i(x,w_i) \le-\alpha_i(V_i(x_i)),\quad \forall x,~\forall w_i.
\end{align*}
\end{itemize}
Then, the system~\eqref{connectedsystem} is ISS if for each $r=2,\ldots
N$,
\begin{align} \label{align:cyclic-small-gain}
\gamma_{x_{i_1}}^{x_{i_2}}\circ\gamma_{x_{i_2}}^{x_{i_3}}\circ\cdots\circ\gamma_{x_{i_r}}^{x_{i_1}}<\mathrm{Id}
\end{align}
for all $1\le i_j\le N, i_j\neq i_{j'}$ if ${j\neq j'}$.
\end{lemma}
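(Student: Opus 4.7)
My plan is to establish ISS of the interconnected system~\eqref{connectedsystem} by constructing a composite ISS-Lyapunov function of the form
\[
V(x) \;=\; \max_{1 \le i \le N} \sigma_i\bigl(V_i(x_i)\bigr),
\]
where the $\sigma_i \in \kf_\infty$ are chosen so as to reconcile the individual interconnection gains, and then appealing to Lemma~\ref{ISSandLyapulov}. The radial unboundedness $\underline{a}(|x|) \le V(x) \le \overline{a}(|x|)$ for suitable $\underline{a}, \overline{a} \in \kf_\infty$ follows directly from the hypotheses on $V_i$ and the $\kf_\infty$ nature of the $\sigma_i$.

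The first and hardest step is the choice of the $\sigma_i$'s. I would encode the gains as a weighted digraph on $\{1,\ldots,N\}$ with an edge $j \to i$ of weight $\gamma_{x_i}^{x_j}$ whenever $\gamma_{x_i}^{x_j} \ne 0$. The hypothesis~\eqref{align:cyclic-small-gain} states exactly that the $\kf$-composition of weights around every directed cycle is strictly less than $\mathrm{Id}$, and this is precisely what is needed to admit a consistent ``potential'' assignment. Following the construction in~\cite{liu2011lyapunov}, one defines, for each $i$, the function $\sigma_i^{-1}$ in terms of a (slightly amplified) maximum of compositions of gains along directed paths terminating at~$i$; the strict cycle contraction guarantees that the supremum over all such paths remains a well-defined $\kf_\infty$ function, and that the resulting $\sigma_i$ satisfy, for every gain edge $(j,i)$,
\[
\sigma_i \,\circ\, \gamma_{x_i}^{x_j} \,\circ\, \sigma_j^{-1} \;<\; \mathrm{Id}.
\]

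Next I would verify the ISS-Lyapunov implication for $V$. Fix $x$ with $V(x) = \sigma_{i^*}(V_{i^*}(x_{i^*}))$ for some (possibly non-unique) index $i^*$, and set $\hat\gamma := \max_{i} \,\sigma_i \circ \gamma_{x_i}^{w_i}$. If $V(x) \ge \hat\gamma(|w|)$, then in particular $\sigma_{i^*}(V_{i^*}) \ge \sigma_{i^*} \circ \gamma_{x_{i^*}}^{w_{i^*}}(|w_{i^*}|)$, giving $V_{i^*}(x_{i^*}) \ge \gamma_{x_{i^*}}^{w_{i^*}}(|w_{i^*}|)$; and the defining inequality $\sigma_{i^*}(V_{i^*}) \ge \sigma_j(V_j)$ together with the gain-shift property above yields $V_{i^*}(x_{i^*}) \ge \gamma_{x_{i^*}}^{x_j}(V_j(x_j))$ for every $j \ne i^*$. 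The individual ISS-Lyapunov dissipation then gives $\nabla V_{i^*}(x_{i^*}) \zeta_{i^*}(x,w_{i^*}) \le -\alpha_{i^*}(V_{i^*}(x_{i^*}))$, whence a decrease of $V$ in the appropriate nonsmooth sense can be extracted by a Danskin/generalized-gradient argument (since $V$ is a pointwise maximum of smooth quantities). Collecting these over all possible active indices yields a positive definite $\alpha$ with
\[
V(x) \;\ge\; \hat\gamma(|w|) \quad\Longrightarrow\quad D^{+} V \;\le\; -\alpha(V),
\]
so $V$ is a bona fide ISS-Lyapunov function for~\eqref{connectedsystem}, and Lemma~\ref{ISSandLyapulov} concludes ISS.

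The principal obstacle is the combinatorial construction of the $\sigma_i$'s: one has to show that the path-based maxima defining $\sigma_i^{-1}$ remain finite and $\kf_\infty$, which rests on the strict cyclic contraction in~\eqref{align:cyclic-small-gain}, and one has to legitimately differentiate the max in the Lyapunov step, which requires the nonsmooth calculus alluded to above. Once these are in hand, the rest of the argument is essentially a bookkeeping exercise translating each inequality through the $\sigma_i$'s.
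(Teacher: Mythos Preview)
The paper does not prove this lemma at all: it is quoted verbatim as a known result from~\cite{liu2011lyapunov} and used as a black box in the subsequent analysis, so there is no ``paper's own proof'' to compare against. Your sketch is in fact the standard route taken in~\cite{liu2011lyapunov}---build $V=\max_i \sigma_i(V_i)$ with the $\sigma_i$ obtained from path-wise compositions of (slightly inflated) gains, use the cyclic contraction~\eqref{align:cyclic-small-gain} to guarantee the construction terminates in genuine $\kf_\infty$ functions satisfying $\sigma_i\circ\gamma_{x_i}^{x_j}\circ\sigma_j^{-1}<\mathrm{Id}$, and then run a nonsmooth Danskin-type argument on the max---so the approach is correct and coincides with the original reference the paper cites.
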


\section{The Edge Laplacian of Digraph}\label{sec:Edge Laplacian For Digraph}

The edge Laplacian is a promising graph-theoretic tool, however it still remains to an undirected notion and is thus inadequate to handle our problem. Undoubtedly, extending the concept of the edge Laplacian to the digraph and exploring its algebraic properties will contribute significantly to the investigation of multi-agent systems. In this section, we first give the definition of the in-incidence matrix and out-incidence matrix.

\begin{remark}
As known that the ``out-degree'' relates how each node in the network impacts on other nodes, but the ``in-degree'' directly captures how the dynamics of an agent is influenced by others \cite{mesbahi2010graph}. In fact, the following investigations, including the definitions and properties, all adopt the ``out-degree'' description. However, for the ``in-degree'' case, analogous methods can also be applied.
\end{remark}

\begin{definition}[\emph{In-incidence} and \emph{Out-incidence} Matrix]

The $N \times L$ in-incidence matrix ${E_{\odot} \left( \mathcal{G} \right)}$ for a digraph $\mathcal{G}$ is a $\{ 0, - 1\}$ matrix with rows and columns indexed by nodes and edges of $\mathcal{G}$, respectively, such that
$$
\left[ {E_{\odot} \left( \mathcal{G} \right)} \right]_{ik}  := \begin{cases}
-1 & \text{if $i$ is the terminal node of edge $e_k$} \\
0  & \text{otherwise}
\end{cases}
$$
and the out-incidence matrix is a $\{ 0, + 1\}$ matrix, defined as
$$
\left[ {E_\otimes \left(\mathcal{G} \right)} \right]_{ik}  := \begin{cases}
+1 & \text{if $i$ is the initial node of edge $e_k$} \\
0  & \text{otherwise.}
\end{cases}
$$
\end{definition}

Comparing with the definition of the incidence matrix, we can write $E\left(\mathcal{G} \right)$ in the following manner
\begin{align}\label{align:incidentrelation}
E(\mathcal{G}) = E_ \odot(\mathcal{G})   + E_ \otimes(\mathcal{G}).
\end{align}
In the following discussions, we use $E, E_ \odot$ and $E_ \otimes$ instead of $E(\mathcal{G}), E_ \odot(\mathcal{G})$ and $E_ \otimes(\mathcal{G})$.

Due to the fact that each row of the out-incidence matrix actually can be viewed as a decomposition of the out-degree from a node to each specific edge, we can derive a novel factorization of the graph Laplacian.

\begin{lemma}\label{lem:lap}
Considering a digraph $\mathcal{G}$ with the incidence matrix $E$ and the out-incidence matrix $E_\otimes$, then the graph Laplacian of $\mathcal{G}$ have the following expression:
\begin{equation}\label{equ:Lap}
L_\mathcal{G} = E_\otimes E^T.
\end{equation}
\end{lemma}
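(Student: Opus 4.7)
My plan is to verify the identity $L_\mathcal{G} = E_\otimes E^T$ by an entrywise matrix computation, exploiting the decomposition $E = E_\odot + E_\otimes$ from \eqref{align:incidentrelation} to split the product into two pieces, each of which I will identify with one of the summands in the definition $L_\mathcal{G} = \Delta_\mathcal{G} - A_\mathcal{G}$.

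First, I would expand
\begin{equation*}
E_\otimes E^T = E_\otimes(E_\odot + E_\otimes)^T = E_\otimes E_\otimes^T + E_\otimes E_\odot^T.
\end{equation*}
Then I would evaluate $E_\otimes E_\otimes^T$ entrywise. The $(i,j)$ entry equals $\sum_{k=1}^{L}[E_\otimes]_{ik}[E_\otimes]_{jk}$. Since each column of $E_\otimes$ has exactly one $+1$ (at the initial node of the corresponding edge) and zeros elsewhere, the off-diagonal entries vanish, and the diagonal entry at $(i,i)$ counts the number of edges whose initial node is $i$, i.e., the out-degree of node $i$. Hence $E_\otimes E_\otimes^T = \Delta_\mathcal{G}$.

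Next I would handle $E_\otimes E_\odot^T$ in the same way. The $(i,j)$ entry is $\sum_{k=1}^{L}[E_\otimes]_{ik}[E_\odot]_{jk}$, and by the definitions this is nonzero only for those indices $k$ such that $i$ is the initial node of $e_k$ and $j$ is the terminal node of $e_k$; each such $k$ contributes $(+1)(-1) = -1$. Summing over $k$ gives $-\alpha_{ij}$, so $E_\otimes E_\odot^T = -A_\mathcal{G}$. Combining the two pieces yields $E_\otimes E^T = \Delta_\mathcal{G} - A_\mathcal{G} = L_\mathcal{G}$.

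I do not anticipate a genuine obstacle here: the statement is essentially a bookkeeping exercise once the right decomposition is used. The only subtlety worth stating carefully is the sign bookkeeping in $E_\otimes E_\odot^T$ (making sure which factor contributes the $+1$ and which the $-1$) and observing that the asymmetry between $E_\otimes$ and $E_\odot$ is precisely what prevents the cross term $E_\odot E_\otimes^T$ from appearing, which would otherwise have spoiled the identity; this asymmetry also explains, in a sentence I would add as a remark, why the analogous symmetric product $EE^T$ used in the undirected case does not specialize correctly to the directed graph Laplacian.
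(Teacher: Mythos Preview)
Your proposal is correct and follows essentially the same route as the paper's own proof: both split $E_\otimes E^T$ via $E = E_\odot + E_\otimes$, identify $E_\otimes E_\otimes^T = \Delta_\mathcal{G}$ and $E_\otimes E_\odot^T = -A_\mathcal{G}$ by an entrywise count, and combine. Your write-up is in fact a bit more explicit about the entrywise computation and the sign bookkeeping than the paper's, and your closing remark on why the symmetric product $EE^T$ fails in the directed case is a nice addition not present in the original.
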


\begin{proof}
According to the preceding definition of $E_\otimes$, we obtain that
\begin{equation*}
E_{\otimes i} E_{\otimes j}^T  =  \begin{cases}
\left[\Delta_{ii}\right] & \text{if $i = j$} \\
0  & \text{otherwise}\\
\end{cases}
\end{equation*}
which implies $E_\otimes E_\otimes^T =  \Delta_\mathcal{G}$. Similarly, we have
\begin{equation*}
E_{\otimes i} E_{\odot j}^T  =  \begin{cases}
-1 & \text{if $j \in N_i$} \\
0  & \text{otherwise}\\
\end{cases}
\end{equation*}
where $N_i$ denotes the neighbor set of node $i$, and we can collect \ul{the} terms as $E_\otimes E_\odot^T = - {A}_\mathcal{G}$. According to the definition of the graph Laplacian and equation \eqref{align:incidentrelation}, we have
$$
E_\otimes E^T =  E_\otimes E_\otimes^T+ E_\otimes E_\odot^T=\Delta_\mathcal{G}-{A}_\mathcal{G}=L_\mathcal{G}.
$$
The proof is concluded.
\end{proof}

Next, we give the definition of the edge variant of the graph Laplacian.

\begin{definition}[\emph{Edge Laplacian of Digraph}]
The edge Laplacian of digraph is defined as
\begin{align*}
L_e := E^T E_\otimes
\end{align*}
with $L \times L$ elements.
\end{definition}

\begin{definition}[\emph{Edge Adjacency Matrix}]
$$
\left[ {A_e } \right]_{kl} : =
\begin{cases}
+1 & \text{ $e_k, e_l \in  \mathcal{E}^i_\mathcal{C}$} \\
-1 & \text{ $e_k$ = $\mathcal{P}^i$ and $e_l = \mathcal{C}^i$} \\
0  & \text{otherwise.}
\end{cases}
$$
\end{definition}

For instance, the edge Laplacian matrix and the edge adjacency matrix of the simple digraph shown in Figure \ref{figure:exam} are, respectively
\begin{align*}
L_e =  \left( {\begin{matrix}
   {1} & -1 & 1 & -1   \cr
   -1 & 1 & {-1} & 1   \cr
   1 & { 0} & 1 &  0    \cr
   0 &  1   & 0 &  1       \cr
 \end{matrix} } \right)  ~~\text{and}~~
A_e = \left( {\begin{matrix}
   {0} & -1 & 1 & -1   \cr
   -1 & 0 & {-1} & 1   \cr
   1 & { 0} & 0 &  0    \cr
   0 &  1   & 0 &  0       \cr
 \end{matrix} } \right).
\end{align*}

\begin{lemma}
The edge Laplacian can be constructed from the edge adjacency matrix
\begin{equation}\label{equation:Leeuqal2IaddAe}
L_e  = I + A_e.
\end{equation}
\end{lemma}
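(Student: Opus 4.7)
The plan is to exploit the decomposition $E = E_\odot + E_\otimes$ from \eqref{align:incidentrelation} and expand the defining product:
\begin{equation*}
L_e = E^T E_\otimes = (E_\odot + E_\otimes)^T E_\otimes = E_\otimes^T E_\otimes + E_\odot^T E_\otimes.
\end{equation*}
I would then identify each of the two terms on the right with $I$ and $A_e$ respectively by computing their $(k,l)$ entries. The key structural observation is that each column of $E_\otimes$ contains exactly one nonzero entry ($+1$, located at the initial node of the corresponding edge) and each column of $E_\odot$ contains exactly one nonzero entry ($-1$, located at the terminal node). This single-nonzero-per-column structure reduces every entry of the two Gram-style products to a one-case inner product rather than a genuine sum.

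For $E_\otimes^T E_\otimes$, the $(k,l)$ entry equals $+1$ precisely when the initial nodes of $e_k$ and $e_l$ coincide, and $0$ otherwise. Taking $k=l$ forces this to be $1$, giving the identity part. Taking $k \neq l$, a common initial node means $e_k$ and $e_l$ are sibling edges originating from the same parent node, i.e., $e_k, e_l \in \mathcal{E}^i_\mathcal{C}$ for some $i$, which matches the $+1$ case in the definition of $A_e$. For $E_\odot^T E_\otimes$, the $(k,l)$ entry equals $-1$ precisely when the terminal node of $e_k$ equals the initial node of $e_l$; this is exactly the situation where $e_k = \mathcal{P}^i$ and $e_l = \mathcal{C}^i$ at the conjunct node $i$, which matches the $-1$ case in the definition of $A_e$. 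The diagonal of $E_\odot^T E_\otimes$ vanishes since a single edge cannot have coincident initial and terminal nodes (no self-loops in $\mathcal{G}$).

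Assembling the three cases (diagonal, sibling pair, parent--child pair, and otherwise) gives $[L_e]_{kl} = [I]_{kl} + [A_e]_{kl}$ entrywise, yielding \eqref{equation:Leeuqal2IaddAe}. The main subtlety I anticipate is purely bookkeeping: keeping the orientation conventions straight so that $E_\odot^T E_\otimes$ (as opposed to $E_\otimes^T E_\odot$) is the one producing the parent-to-child matching, since the ``parent edge of $i$'' has $i$ as its terminal node while the ``child edge of $i$'' has $i$ as its initial node. Once that directionality is pinned down, the identification with the three cases in the definition of $A_e$ is immediate and no further computation is required.
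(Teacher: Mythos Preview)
Your proposal is correct and follows essentially the same approach as the paper, which simply remarks that the result ``comes after the proof of Lemma~\ref{lem:lap}'' --- i.e., expand $L_e = E^T E_\otimes$ via $E = E_\odot + E_\otimes$ and read off the entries, exactly as you do. One small wording caution: your sentence ``identify each of the two terms on the right with $I$ and $A_e$ respectively'' is literally false (the off-diagonal sibling entries of $E_\otimes^T E_\otimes$ belong to $A_e$, not $I$), but your subsequent entrywise analysis handles this correctly, so the argument itself is sound.
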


\begin{proof}
According to the definition of $E_\otimes$ and $A_e$, the result comes after the proof of Lemma \ref{lem:lap}.
\end{proof}

To provide a deeper insight into what the edge Laplacian $L_e$ offers in the analysis and synthesis of multi-agent systems, we propose the following lemma.

\begin{lemma}\label{theorem:Laplacianeigequal}
For any digraph $\mathcal{G}$, the graph Laplacian $L_{\mathcal{G}}$ and the edge Laplacian $L_e$ have the same nonzero eigenvalues. In addition, the edge Laplacian $L_e$ contains exactly $N-1$ nonzero eigenvalues and all in the open right-half plane, if $\mathcal{G}$ is quasi-strongly connected.
\end{lemma}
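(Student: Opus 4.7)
My plan is to reduce both claims to two standard facts: (i) the identity of the nonzero spectrum of $AB$ and $BA$, and (ii) the well-known spectral picture of the graph Laplacian of a digraph with a directed spanning tree. The crucial hook is the factorization $L_\mathcal{G}=E_\otimes E^T$ just established in Lemma \ref{lem:lap}, together with the definition $L_e=E^T E_\otimes$. Setting $A=E_\otimes\in\mathbb{R}^{N\times L}$ and $B=E^T\in\mathbb{R}^{L\times N}$, we have precisely $L_\mathcal{G}=AB$ and $L_e=BA$, so the two matrices are a ``swap'' pair of rectangular products.

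\emph{Step 1 (shared nonzero eigenvalues).} I would invoke the classical identity
\begin{equation*}
\det(\lambda I_L - BA)\,\lambda^{N} = \det(\lambda I_N - AB)\,\lambda^{L},
\end{equation*}
which follows from a Schur-complement expansion of $\det\bigl(\begin{smallmatrix}\lambda I_N & A\\ B & I_L\end{smallmatrix}\bigr)$ performed in two different orders. Applied to our setting this gives
\begin{equation*}
\det(\lambda I_L - L_e) = \lambda^{L-N}\,\det(\lambda I_N - L_\mathcal{G}),
\end{equation*}
so the characteristic polynomials agree up to the factor $\lambda^{L-N}$. In particular $L_\mathcal{G}$ and $L_e$ share the same nonzero eigenvalues with the same algebraic multiplicities, settling the first assertion.

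\emph{Step 2 (localization when $\mathcal{G}$ is quasi-strongly connected).} Here I would cite the standard digraph-Laplacian result: if $\mathcal{G}$ has a directed spanning tree, then $L_\mathcal{G}$ has a simple eigenvalue at $0$ (with right eigenvector $\mathbf{1}$) and its remaining $N-1$ eigenvalues lie in the open right-half plane. Combining this with the displayed polynomial identity of Step 1, the characteristic polynomial of $L_e$ equals $\lambda^{L-N+1}$ times a degree-$(N-1)$ polynomial whose roots all have positive real part. Hence $L_e$ has exactly $N-1$ nonzero eigenvalues, all in the open right-half plane, while the remaining $L-N+1$ eigenvalues are zero; note this is consistent with the rank bound $\mathrm{rank}(L_e)=\mathrm{rank}(E^T E_\otimes)\le \mathrm{rank}(E^T)=N-1$ guaranteed by quasi-strong connectivity.

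\emph{Main obstacle.} Neither step is computationally hard, but care is needed at two points. First, the algebraic-multiplicity count, not just the set of eigenvalues, has to be preserved in Step 1; this is exactly why I want to use the determinant identity (which tracks multiplicities) rather than an eigenvector-level argument. Second, pinning down \emph{exactly} $N-1$ nonzero eigenvalues rather than ``at most $N-1$'' relies on the simplicity of the zero eigenvalue of $L_\mathcal{G}$ under quasi-strong connectivity; this is the one place where the spanning-tree hypothesis is genuinely used, and it is the only subtle piece of the bookkeeping.
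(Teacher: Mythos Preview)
Your proof is correct, and it proceeds by a genuinely different mechanism than the paper's. The paper argues at the eigenvector level: if $L_\mathcal{G} p = \lambda p$ with $\lambda\ne 0$, then $\bar p := E^T p \ne 0$ and left-multiplying by $E^T$ gives $L_e \bar p = \lambda \bar p$; the reverse direction is symmetric. This establishes only that the \emph{sets} of nonzero eigenvalues coincide, after which the paper quotes the Ren--Beard result on $L_\mathcal{G}$ to assert ``exactly $N-1$'' nonzero eigenvalues for $L_e$, implicitly taking multiplicities along for the ride. Your determinant identity $\lambda^{N}\det(\lambda I_L - L_e)=\lambda^{L}\det(\lambda I_N - L_\mathcal{G})$ is stronger: it matches the nonzero spectra \emph{with algebraic multiplicities} in one line, so the exact count in Step~2 is fully justified rather than tacit. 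One cosmetic caveat: the rearranged form $\det(\lambda I_L - L_e)=\lambda^{L-N}\det(\lambda I_N - L_\mathcal{G})$ assumes $L\ge N$; for a pure spanning tree $L=N-1$ and the power of $\lambda$ belongs on the other side. Your symmetric version already handles both cases, and the conclusion is unchanged.
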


\begin{proof}
Suppose that $\lambda \ne 0$ is an eigenvalue of $L_\mathcal{G}$, which is associated with a nonzero eigenvector $p$. Therefore, we have
\begin{align}\label{align:eigen}
L_\mathcal{G} p = E_\otimes E^T p = \lambda p
\end{align}
which implies $E^T  p = \bar p \ne  0$. By left-multiplying both sides of (\ref{align:eigen}) by $E^T$, one can obtain
\begin{align*}
L_e \bar p = E^T E_\otimes E^T p = \lambda \bar p
\end{align*}
which shows that $L_e$ contains the nonzero eigenvalues that $L_{\mathcal{G}}$ has.

By using similar approaches, we can proof that $L_{\mathcal{G}}$ also has all the nonzero eigenvalues of $L_e$. It turns out that the nonzero eigenvalues of $L_\mathcal{G}$ and $L_e$ are identical.

By Lemma 3.3 in \cite{ren2005consensus}, for a quasi-strongly connected digraph $\mathcal{G}$ of order $N$, $L_\mathcal{G}$ has $N-1$ nonzero eigenvalues and all in the open right-half plane, therefore $L_e$ contains exactly $N-1$ nonzero eigenvalues as well. Then we come to the conclusion.
\end{proof}


\begin{lemma}\label{thm:zeroeigen}
Consider a quasi-strongly connected digraph $\mathcal{G}$ of order $N$, the edge Laplacian $L_e$ has $L-N+1$ zero eigenvalues and zero is a simple root of the minimal polynomial of $L_e$.
\end{lemma}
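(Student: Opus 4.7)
The plan is to derive the algebraic multiplicity of the zero eigenvalue directly from Lemma~\ref{theorem:Laplacianeigequal}, then force the geometric multiplicity to match it by pinning down $\mathrm{rank}(L_e)$ from two sides.

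First, since $L_e\in\mathbb{R}^{L\times L}$ and, by Lemma~\ref{theorem:Laplacianeigequal}, $L_e$ has exactly $N-1$ nonzero eigenvalues counted with algebraic multiplicity when $\mathcal{G}$ is quasi-strongly connected, the algebraic multiplicity of the eigenvalue $0$ must be $L-(N-1)=L-N+1$. This already settles the first claim about the number of zero eigenvalues.

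Next, to prove that $0$ is a simple root of the minimal polynomial, I would show that its geometric multiplicity equals its algebraic multiplicity, i.e.\ $\dim\mathcal{N}(L_e)=L-N+1$, which is equivalent to $\mathrm{rank}(L_e)=N-1$. I would obtain this equality by a squeeze. For the upper bound, the factorization $L_e=E^{T}E_{\otimes}$ together with the standard inequality $\mathrm{rank}(AB)\le\mathrm{rank}(A)$ and the fact (cited in Section~\ref{sec:basis}) that $\mathrm{rank}(E)=N-1$ for a quasi-strongly connected digraph yields $\mathrm{rank}(L_e)\le\mathrm{rank}(E^{T})=N-1$. For the lower bound, I would invoke the general linear-algebra fact that for any square matrix $M$ of size $n$, $\mathrm{rank}(M)\ge n-(\text{algebraic multiplicity of }0)$, because the geometric multiplicity of $0$ never exceeds its algebraic multiplicity. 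Applied to $L_e$, this gives $\mathrm{rank}(L_e)\ge L-(L-N+1)=N-1$. Combining the two bounds produces $\mathrm{rank}(L_e)=N-1$.

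From this equality, $\dim\mathcal{N}(L_e)=L-N+1$ coincides with the algebraic multiplicity of $0$, so every Jordan block of $L_e$ associated with the zero eigenvalue is $1\times 1$. Equivalently, $0$ is a semisimple eigenvalue, hence a simple root of the minimal polynomial of $L_e$, which concludes the proof.

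The only subtle step is the lower bound on $\mathrm{rank}(L_e)$: it relies on the elementary but easily overlooked inequality $\mathrm{rank}(M)\ge n-\dim\mathcal{N}(M)\ge n-(\text{alg.\ mult.\ of }0)$, so I would spell that out carefully. Everything else is a direct application of Lemma~\ref{theorem:Laplacianeigequal} and the quasi-strong-connectivity rank property of the incidence matrix.
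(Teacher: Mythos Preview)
Your proof is correct and follows essentially the same squeeze argument as the paper: bound $\mathrm{rank}(L_e)$ above by $\mathrm{rank}(E^T)=N-1$ via the factorization $L_e=E^T E_\otimes$, and below by $N-1$ because the algebraic multiplicity of $0$ is $L-N+1$, then conclude that the geometric and algebraic multiplicities of $0$ coincide so every Jordan block at $0$ has size one. One cosmetic slip: in your displayed chain the first inequality should be the rank--nullity equality $\mathrm{rank}(M)=n-\dim\mathcal{N}(M)$, not $\ge$; the subsequent inequality is where the ``geometric $\le$ algebraic'' fact enters.
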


\begin{proof}
\hl{Consider the quasi-strongly connected digraph $\mathcal{G}$, $L_e$ has exactly $N-1$ nonzero eigenvalues; therefore}
\begin{equation}\label{l1}
rank(L_e) \ge N-1
\end{equation}
\hl{and the algebraic multiplicity of the zero eigenvalue of $L_e$ is $L-N+1$. Besides, recall the fact that $L_e =  E^T E_\otimes$ and $rank(E)= N-1$, so we have}
\begin{equation}\label{l2}
rank(L_e) \le rank(E^T) = N-1.
\end{equation}
\hl{Then by combining} \eqref{l1} and \eqref{l2}, \hl{one can obtain $rank(L_e) = N-1$, i.e., the dimension of the null space of $L_e$ is $dim~\mathcal{N}{(L_e)} = L-N+1$. In other words, the geometric multiplicity of zero eigenvalue is $L-N+1$. Clearly, the geometric multiplicity and the algebraic multiplicity of zero eigenvalue are equal, which implies that the corresponding Jordan block for each zero eigenvalue is size one from} \cite{meyer2000matrix}. \hl{That is, zero is a simple root of the minimal polynomial of $L_e$}.
\end{proof}

From \cite{zelazo2007agreement}, for the undirected graph, the dynamics of the edge agreement model can be captured by the reduced order system based on the spanning tree. Next, we will further discuss the similar results under digraph.

Clearly, if the digraph $\mathcal{G}$ is quasi-strongly connected, it can be rewritten as a union form: $\mathcal{G} = \mathcal{G}_\mathcal{T}  \cup \mathcal{G}_\mathcal{C}$, where $\mathcal{G}_\mathcal{T}$ is a given spanning tree and $\mathcal{G}_\mathcal{C}$ is the cospanning tree respectively. Correspondingly, the incidence matrix can be rewritten as
\begin{align*}
 E = \left[ {\begin{matrix}
   { E_\mathcal{T} } & { E_\mathcal{C} }  \cr
 \end{matrix}} \right]
\end{align*}
through some permutations, and $ E_\mathcal{T}, E_\mathcal{C}$ are incidence matrices with respect to $\mathcal{G}_\mathcal{T}$ and $\mathcal{G}_\mathcal{C}$. Similarly, the out-incidence matrix can be rewritten as
\begin{align*}
 E_\otimes = \left[ {\begin{matrix}
   { E_{\otimes \mathcal{T}} } & { E_{\otimes \mathcal{C}}}  \cr
 \end{matrix}} \right].
 \end{align*}
It should be mentioned that $\mathcal{G}_\mathcal{C}$ can be reconstructed from $\mathcal{G}_\mathcal{T}$, which indicates that $ E_\mathcal{T}$ has full column rank \cite{zelazo2007agreement}.

According to the partition, one can represent the edge Laplacian in terms of the block form of the incidence matrix as
\begin{align*}
{ L_e}  = { E^T}  E_\otimes  
 = \left[ {\begin{matrix}
   { E_\mathcal{T}^T{ E_{\otimes \mathcal{T}}}} & { E_\mathcal{T}^T{ E_{\otimes \mathcal{C}}}}  \cr
   { E_\mathcal{C}^T{ E_{\otimes \mathcal{T}}}} & { E_\mathcal{C}^T{ E_{\otimes \mathcal{C}}}}  \cr
 \end{matrix} } \right] 
 =\left[ {\begin{matrix}
   {{ L_{e1}}} & {{ L_{e2}}}  \cr
   {{ L_{e3}}} & {{ L_{e4}}}  \cr
\end{matrix} } \right].
\end{align*}
Also it is useful to express the edge adjacency matrix $ A_e$ as the block representation
\begin{align*}
{ A_e} = \left[ {\begin{matrix}
   {{ A_{e1}}} & {{ A_{e2}}}  \cr
   {{ A_{e3}}} & {{ A_{e4}}}  \cr
 \end{matrix} } \right].
 \end{align*}
Following from (\ref{equation:Leeuqal2IaddAe}), we have
\begin{align*}
 { L_{e1}} = {I_1 +  A_{e1}},~
 { L_{e2}} = { A_{e2}},
~ { L_{e3}} = { A_{e3}}
,~\text{and}~{ L_{e4}} = {I_2 +  A_{e4}}.
\end{align*}

\begin{lemma}\label{lemma:pseudo-inverse}
Considering a quasi-strongly connected digraph $\mathcal{G} = \mathcal{G}_\mathcal{T}  \cup \mathcal{G}_\mathcal{C}$, the pseudoinverse of the incidence matrix  $ E^ \dag$ exists, and there exists a matrix $R$ such that
\begin{equation}
 E^ \dag   = {R}^\dag  E_\mathcal{T}^\dag
\end{equation}
with ${R}^\dag =  R ^T\left[ {R R^T  } \right]^{ - 1}$ and $ E_\mathcal{T}^\dag = \left[ { E_\mathcal{T} ^T  E_\mathcal{T} } \right]^{ - 1}  E_\mathcal{T} ^T $, where ${R}^\dag$ is the right-inverse of $R$ and $ E_\mathcal{T}^\dag$ is the left-inverse of $ E_\mathcal{T}$.

\end{lemma}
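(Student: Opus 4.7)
The plan is to exhibit $E$ as a full-rank factorization and then invoke the standard pseudoinverse formula for such factorizations. The key algebraic observation is this: since $\mathcal{G}$ is quasi-strongly connected, $rank(E)=N-1$ and $E_\mathcal{T}$ already has full column rank $N-1$, so every column of $E_\mathcal{C}$ must lie in the column span of $E_\mathcal{T}$. Graph-theoretically, this just says that each non-tree edge closes a unique undirected path in the spanning tree, hence its incidence column is a signed sum of tree-edge columns. Consequently there exists a matrix $T$ of size $(N-1)\times(L-N+1)$ with $E_\mathcal{C}=E_\mathcal{T} T$, giving
$$
E \;=\; E_\mathcal{T}\,[\,I_{N-1}\ \ T\,] \;=:\; E_\mathcal{T}\,R.
$$

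Next I would check the two factors behave as claimed. The block $R=[\,I\ T\,]$ contains an identity block, so $R$ has full row rank $N-1$; therefore $RR^T$ is invertible and the right-inverse $R^\dag = R^T(RR^T)^{-1}$ is well defined, with $RR^\dag = I_{N-1}$. On the other side, $E_\mathcal{T}$ has full column rank $N-1$, so $E_\mathcal{T}^T E_\mathcal{T}$ is invertible and $E_\mathcal{T}^\dag=(E_\mathcal{T}^T E_\mathcal{T})^{-1}E_\mathcal{T}^T$ is a genuine left-inverse, with $E_\mathcal{T}^\dag E_\mathcal{T} = I_{N-1}$. Thus $E = E_\mathcal{T} R$ is a full-rank factorization of $E$, both factors having rank $N-1$.

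Finally, I would invoke the classical identity that whenever $A = BC$ is a full-rank factorization with $B$ of full column rank and $C$ of full row rank, the Moore--Penrose pseudoinverse satisfies $A^\dag = C^\dag B^\dag$. Specialized to $B=E_\mathcal{T}$ and $C=R$ this immediately yields
$$
E^\dag \;=\; R^\dag E_\mathcal{T}^\dag \;=\; R^T(RR^T)^{-1}(E_\mathcal{T}^T E_\mathcal{T})^{-1}E_\mathcal{T}^T,
$$
which is the claim. If a self-contained derivation is preferred, one can verify the four Moore--Penrose axioms directly on this product; they collapse to short calculations using $E_\mathcal{T}^\dag E_\mathcal{T}=I_{N-1}$ and $RR^\dag = I_{N-1}$.

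The main obstacle is really the first step, namely producing $T$ with $E_\mathcal{C}=E_\mathcal{T} T$. The rank argument above is the clean way to get existence, but to make the construction explicit (and to identify $T$ as the fundamental-cycle matrix of $\mathcal{G}$ relative to $\mathcal{G}_\mathcal{T}$) one has to inspect how each non-tree column of $E_\mathcal{C}$ decomposes along the unique tree path joining its endpoints. Once the factorization $E=E_\mathcal{T} R$ is in hand, everything else is routine bookkeeping with pseudoinverses.
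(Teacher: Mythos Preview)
Your proposal is correct and follows essentially the same route as the paper: both establish the factorization $E=E_\mathcal{T}R$ with $R=[I\ T]$ via the observation that the columns of $E_\mathcal{C}$ lie in the column span of $E_\mathcal{T}$, then invoke the full-rank-factorization formula $E^\dag=R^\dag E_\mathcal{T}^\dag$ (the paper cites \cite{ben2003generalized} for this identity). Your argument is in fact slightly more explicit, since you justify why $R$ has full row rank (the identity block) rather than asserting it, and you point to the fundamental-cycle interpretation of $T$ where the paper mentions the cut space.
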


\begin{proof}
\hl{Since $ E_\mathcal{T}$ has full column rank, so its left-inverse exists and can be directly obtained by $E_\mathcal{T}^\dag = \left[ { E_\mathcal{T} ^T  E_\mathcal{T} } \right]^{ - 1}  E_\mathcal{T} ^T$. Because the columns of $ E_\mathcal{C}$ are linearly dependent on the columns of $ E_\mathcal{T}$, we have}
\begin{align}\label{equation:xc-xt}
 E_\mathcal{T} T = E_\mathcal{C}
\end{align}
\hl{and then the matrix $T$ can be obtained by}
\begin{align}\label{euqation:T}
T   =  E_\mathcal{T} ^\dag  E_\mathcal{C}.
\end{align}
\hl{The matrix $R$ is now defined as} $R  = \left[ {\begin{matrix}
   I & {T }  \cr
 \end{matrix} } \right]$. \hl{In fact, the rows of the matrix $R$ form a basis for the \emph{cut space} of $\mathcal{G}$} \cite{godsil2001algebraic}. \hl{Besides, the incidence matrix of $\mathcal{G}$ can be written as $E =  E_\mathcal{T} R$. Clearly, $ E_\mathcal{T}$ is of full column rank and $R$ is of full row rank; therefore, the pseudoinverse of $E$ can be calculated by  $E^ \dag   = {R}^\dag  E_\mathcal{T}^\dag$ from} \cite{ben2003generalized}.
\hl{Then we reach the conclusion.}
\end{proof}

\section{ROBUST CONSENSUS OF NONLINEAR MULTI-AGENT SYSTEMS VIA ISS DESIGN}

In this section, a new consensus protocol is presented through a seamless integration of graph theory and ISS design. The newly developed cyclic-small-gain theorem is employed to address the challenges caused by unknown but bounded disturbances and the inherently nonlinear dynamics. Contrary to the well-studied graph Laplacian dynamics, we will analyze and synthesize multi-agent systems with the edge perspective by using edge agreement framework. To facilitate a  better understanding, both strongly connected and quasi-strongly connected situations are considered.

To begin our analysis, the dynamics of the $i$-th agent is defined as
\begin{equation}\label{node:dynamics}
\dot x_i \left( t \right) = f \left( {t,x_i } \right) + w_i\left(t\right)+\mu _i\left(t\right),i = 1,2,\cdots,N
\end{equation}
where $x_i \left( t \right) \in {\rtn}^n$ refers to the state vector of the $i$-th node; $f \left( {t,x_i } \right) :\rtn \times{\rtn}^n \to {\rtn}^n$ denotes a Lipschitz continuous function; $w_i\left(t\right)\in {\rtn}^n$ describes unknown but bounded disturbances with the upper bound $\xi \ge 0$, i.e., $\left| w_i \left( t \right)\right| \le  \xi$; $\mu _i\left(t\right)  \in {\rtn}^n$ represents the control input.

In the presence of the disturbances, we should not expect that agents can accurately reach consensus. Therefore, we introduce the \emph{robust consensus} to describe the influence of the disturbances on the behavior of the system.

\begin{definition}[\emph{Robust Consensus}]
In the presence of disturbances $w(t)$ with the upper bound $\xi$, we design a distributed control law $\mu _i\left(t\right)$ for $i = 1,2,\cdots,N$ such that the agent state $x_i$ governed by \eqref{node:dynamics} can reach robust consensus in the nonlinear-gain sense that
\begin{align*}
\left| {x_i(t)  - x_j(t) } \right| \le \ell\left( {\left\| w(t) \right\|_\infty  } \right),~~\ell \in \kf ~~~\text{as}~~~ t \to \infty.
\end{align*}

\end{definition}

Before proceeding, we make the following assumption.
\begin{assumption}
For the nonlinear function $f\left( {t,x_i } \right)$ in (\ref{node:dynamics}), there exists a nonnegative constant $\eta$ such that
\begin{align*}
\left| {f\left( {t,x_1 } \right) - f\left( {t,x_2 } \right)} \right| \le \eta \left| {x_1  - x_2 } \right|,\forall x_1 ,x_2  \in {\rtn}^n ;t \ge 0.
\end{align*}
\end{assumption}

To reach consensus, we employ the following distributed consensus protocol:
\begin{equation}\label{protocolA1}
\mu _i  \left( t \right) =  - \sum\limits_{j \in N_i } { \left( {x_{i} \left( t \right)  - x_{j} \left( t \right) } \right) + u_i \left( t \right) }, i = 1,2,\cdots,N
\end{equation}
where $u_i \left( t \right)$ is the auxiliary control input yet to be designed.


\subsection{Edge Agreement}

Considering an edge $e_k$, we define the \emph{edge state} as $\tilde x_{e_k}\left( t \right)$, which represents the difference between two agents associated with $e_k$. We then have
\begin{equation}\label{align:edgerepresent}
\tilde x_{e_k}\left( t \right) = x_{\otimes \left( e_k \right)} \left( t \right)- x_{\odot \left( e_k \right)}\left( t \right)
\end{equation}
where $\otimes \left( e_k \right)$ and $\odot \left( e_k \right)$ denote the initial node and the terminal node of $e_k$, respectively.

Following this, we obtain
\begin{equation}\label{equation:edgeincidence}
\tilde x_e \left( t \right) = E^T x\left( t \right)
\end{equation}
where $\tilde x_e \left( t \right)$ is the collection of $\tilde x_{e_k}\left( t \right)$. Consider the well-known graph Laplacian dynamics (linear and noise-free) in \cite{olfati2004consensus} as
\begin{align}\label{Lapdyn}
\dot x\left( t \right) =  - Lx\left( t \right).
\end{align}
Differentiating \eqref{equation:edgeincidence} and substituting in \eqref{Lapdyn}, leads to
\begin{align}\label{ELapdyn}
 {\dot {\tilde x}_e}\left( t \right) =  - {L_e}{\tilde x_e}\left( t \right)
\end{align}
which is referred to the \emph{edge agreement protocol}. In comparison to the consensus problem, the edge agreement, rather than requiring the convergence to the agreement subspace \cite{olfati2004consensus}, expects the edge dynamics \eqref{ELapdyn} to converge to the origin. Essentially, the evolution of an edge state depends on its current state and the states of its adjacency edges. In addition, the edge agreement of $\tilde x_e$ implies consensus if the digraph $\mathcal{G}$ has a spanning tree \cite{zelazo2011edge}.



\begin{remark}
Recall that the objective leaderless consensus is defined as $\mathop {\lim }\limits_{t \to \infty} \left| {x_i \left( t \right) - x_j\left( t \right)} \right|=0$. Obviously, directly \ul{modeling} such problems is difficult since the coupling of different agents $x_i \left( t \right)$ and $x_j \left( t \right)$ is involved. However, the edge agreement framework provides a possible method of studying the leaderless consensus problem from the edge perspective. By using \eqref{align:edgerepresent} and \eqref{equation:edgeincidence}, we can turn the consensus problem into an edge agreement problem, where the asymptotic stability of $\tilde x_e$ implies the consensus. In fact, based on such framework, the leaderless consensus problem can be extremely simplified. Moreover, we also suggest that the edge agreement does not only provide a \ul{broader} scope for addressing the leaderless consensus problem but also has a significant potential to address the leader-follower case.
\end{remark}

Given the protocol (\ref{protocolA1}), the graph Laplacian dynamics is obtained as
\begin{equation}\label{vertexdynamics}
\dot x \left( t \right) = \mathcal{F}\left( {t,x} \right) - L_\mathcal{G} x \left( t \right) + w \left( t \right) + u \left( t \right)
\end{equation}
where $\mathcal{F}\left( {t,x}\right)$, $w \left( t\right)$ and $\mu \left( t\right)$ are the column stack vectors of $f \left( {t,x_i }\right)$, $w_i \left( t \right)$ and $\mu _i \left( t \right)$, for $i = 1,2,\cdots,N$.

By differentiating (\ref{equation:edgeincidence}) and substituting into (\ref{vertexdynamics}), we have the following edge Laplacian dynamics:
\begin{align}\label{edge:subsystem}
\dot {\tilde x}_e \left( t \right)
 = E^T \mathcal{F}\left( {t,x} \right) - L_e {\tilde x_e} \left( t \right) +E^T w \left( t \right) + E^T u\left( t \right)
\end{align}
where $u\left( t \right)$ needs to be further determined. For convenience, we define
\begin{equation}\label{viutrual:controllaw}
u_e \left( t \right)  = E^T u \left( t \right).
\end{equation}
From \eqref{equation:Leeuqal2IaddAe}, we have
\begin{align*}
L_{e_k } =  1 + \sum\limits_{e_l  \in N_{e_k }^ \otimes } { [A_e ]_{kl} }.
\end{align*}
Finally, the sate of the $k$-th edge evolves according to the following system:
\begin{align}\label{edge:finaldynamics}
\dot {\tilde x}_{e_k } \left( t \right) = f \left( {t,x_{\otimes \left( e_k \right) } } \right) - f \left( {t,x_{\odot \left( e_k \right) } } \right) - \left( \tilde x_{e_k }\left( t \right)  +
\sum\limits_{e_l  \in N_{e_k }^ \otimes } { [A_e ]_{kl} } \tilde x_{e_l }\left( t \right)\right) + \left[E^T\right]_k w \left( t \right) + u_{e_k }\left( t \right).
\end{align}
\begin{figure}[hbtp]
\centering
{\includegraphics[height=3.0cm]{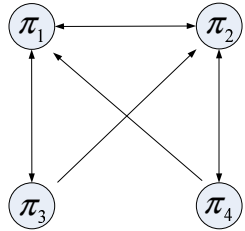}}
\caption{The corresponding edge-interconnection digraph of Figure \ref{figure:exam}.}
\label{figure:exam1}
\end{figure}
\begin{remark}
We use $\mu \left( t \right)$ as the only implementable control input for each agent, while use $u_e$ in the following analysis as a matter of convenience. However, we cannot presume the existence of $u \left( t \right)$ because \eqref{viutrual:controllaw} may not have a solution. Actually, whether or not the equation has a solution depends on the topological structure of the graph, and details will be discussed in the next section.
\end{remark}

By translating into the edge Laplacian dynamics, the novel system establishes a new control interconnection relation. To describe this relationship, we provide the following definition.

\begin{definition}[\emph{edge-interconnection digraph}]
Considering the ${\tilde x}_{e_k }$-subsystems as nodes and the control interconnections as directed edges, the interconnected system composed of the ${\tilde x}_{e_k } $-subsystems can be modeled as a digraph $\tilde {\mathcal{G}}$ called as \emph{edge-interconnection digraph}.
\end{definition}

In fact, $\tilde {\mathcal{G}}$ can be easily constructed from  $\mathcal{G}$, and equation (\ref{edge:finaldynamics}) illustrates the connection of the edge-interconnection digraph. In particular, there are two steps to transform $\mathcal{G}$ into $\tilde {\mathcal{G}}$. First, all the edges $e_k$ in $\mathcal{G}$ are modeled as the vertices and denoted by $\pi_{k}$. Second, base on (\ref{edge:finaldynamics}) and the definition of $A_e$, for any specified edge pair $e_k, e_l$, if $e_k, e_l \in  \mathcal{E}^i_\mathcal{C}$, then $\pi_{k}$ and $\pi_{l}$ are connected by a bidirectional edge; and if $e_k$ = $\mathcal{P}^i$ and $e_l = \mathcal{C}^i$, there will be a directed edge incident into $\pi_{k}$ from $\pi_{l}$.  Figure \ref{figure:exam1} depicts the corresponding edge-interconnection digraph of the example shown in Figure \ref{figure:exam}. To reveal the intricate relation between the original digraph and the edge-interconnection digraph, we need the following Lemma.



\begin{lemma}\label{Lemma:treetoGe}
For a rooted tree, the corresponding edge-interconnection digraph $\tilde  {\mathcal{G}}$ is composed of several strongly connected subgraphs $\tilde  {\mathcal{G}_i}, i = 1,2,\cdots$, which are coupled through simple cascaded connections and parallel connections.
\end{lemma}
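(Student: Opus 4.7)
The plan is to exhibit an explicit partition of the vertex set of $\tilde{\mathcal{G}}$ into strongly connected pieces, and then account for every remaining arc. For each internal node $i$ of the rooted tree $\mathcal{G}_\mathcal{T}$ (every node possessing at least one child edge), I would set $S_i := \{\pi_k : e_k \in \mathcal{E}^i_\mathcal{C}\}$, the collection of $\tilde{\mathcal{G}}$-vertices corresponding to the sibling edges emanating from $i$. Because every tree edge $e_k$ has a unique initial node, the family $\{S_i\}$ forms a partition of the vertex set of $\tilde{\mathcal{G}}$.

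Next I would verify that each $S_i$ is strongly connected. The construction of $\tilde{\mathcal{G}}$ installs a bidirectional arc between $\pi_k$ and $\pi_l$ whenever $e_k, e_l \in \mathcal{E}^i_\mathcal{C}$, so the subgraph induced on $S_i$ is a complete bidirectional graph when $|S_i|\geq 2$ and a trivial singleton otherwise; in either case it is strongly connected.

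The third step is to classify all inter-block arcs. By the definition of $A_e$, the only other arcs in $\tilde{\mathcal{G}}$ are directed ones $\pi_l \to \pi_k$ for which $e_k = \mathcal{P}^i$ and $e_l = \mathcal{C}^i$ at some node $i$. Such an $e_k$ terminates at $i$ and therefore originates at the parent $\mathrm{pa}(i)$, placing $\pi_k$ in $S_{\mathrm{pa}(i)}$; meanwhile $e_l$ originates at $i$, placing $\pi_l$ in $S_i$. Consequently every inter-block arc runs from some vertex of $S_i$ to the single vertex of $S_{\mathrm{pa}(i)}$ associated with the parent arc of $i$, i.e., from a descendant SCC to a specific port of its immediate ancestor SCC.

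Finally, because these inter-block arcs mirror the parent-child relation of $\mathcal{G}_\mathcal{T}$, the SCC-quotient of $\tilde{\mathcal{G}}$ is itself a tree. A linear chain of internal nodes $i_1, i_2, \ldots$ in $\mathcal{G}_\mathcal{T}$ induces a cascade $S_{i_1} \to S_{i_2} \to \cdots$ in $\tilde{\mathcal{G}}$, whereas a node with several internal children yields parallel incoming arcs into the same $S_i$, each attached to a distinct port; no other coupling pattern is possible since the tree has no cycles. The only genuinely delicate point is the bookkeeping in step three — keeping straight which endpoint of a tree edge governs SCC membership and which governs the in-port of the coupling arc — but once the partition $\{S_i\}$ is aligned with the initial-node labels of $\mathcal{G}_\mathcal{T}$ this distinction becomes automatic, and the cascaded-plus-parallel structure follows immediately.
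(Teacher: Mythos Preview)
Your proposal is correct and follows essentially the same approach as the paper: both identify the sibling-edge groups $\mathcal{E}^i_\mathcal{C}$ as the strongly connected components of $\tilde{\mathcal{G}}$, observe that the remaining parent--child arcs all run from a child block $S_i$ into its parent block $S_{\mathrm{pa}(i)}$, and conclude that the SCC-quotient inherits the acyclic tree structure of $\mathcal{G}_\mathcal{T}$. Your write-up is in fact more explicit than the paper's---you spell out why the $S_i$ partition the vertex set and track the endpoint bookkeeping carefully---whereas the paper states these points in one sentence each and defers to an illustrative figure.
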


\begin{proof}
Obviously, the sibling-edges $\mathcal{E}^i_\mathcal{C}$ with the same parent $i$ form a strongly connected components in the edge-interconnection digraph, since they affect each other mutually. Besides, for the $e_k$ = $\mathcal{P}^i$, and $e_l = \mathcal{C}^i$, there will be a directed edge incident into $\pi_{k}$ from $\pi_{l}$ \ul{which} forms the cascaded connection in the same branch. When taking the strongly connected components as nodes, $\tilde  {\mathcal{G}}$ is acyclic and consists of several cascaded connections and parallel connections. For instance, consider a rooted tree in Figure \ref{fig:rootedtree},  the corresponding edge-interconnection digraph is illustrated in Figure \ref{fig:inducedrootedtree}.
\end{proof}

\begin{figure}[hbtp]
\begin{center}
\mbox{\subfigure[A simple rooted tree.]
{\includegraphics[height=40mm]{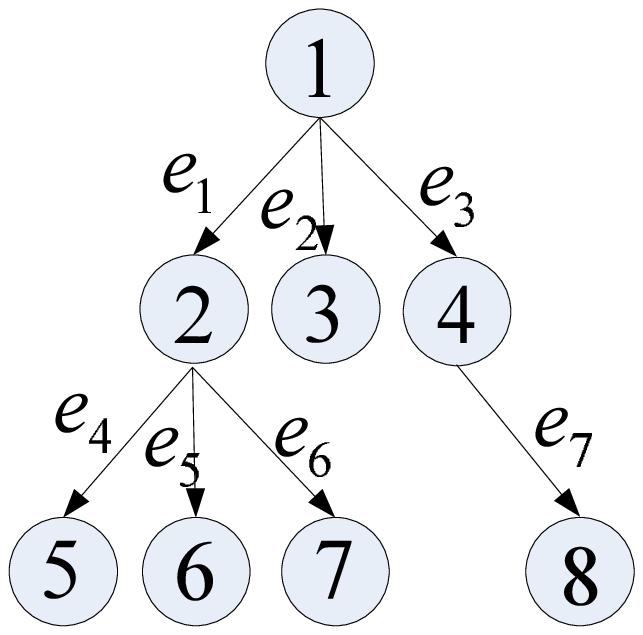}\label{fig:rootedtree}} \hspace{1cm}
\subfigure[The edge-interconnection digraph.]
{\includegraphics[height=40mm]{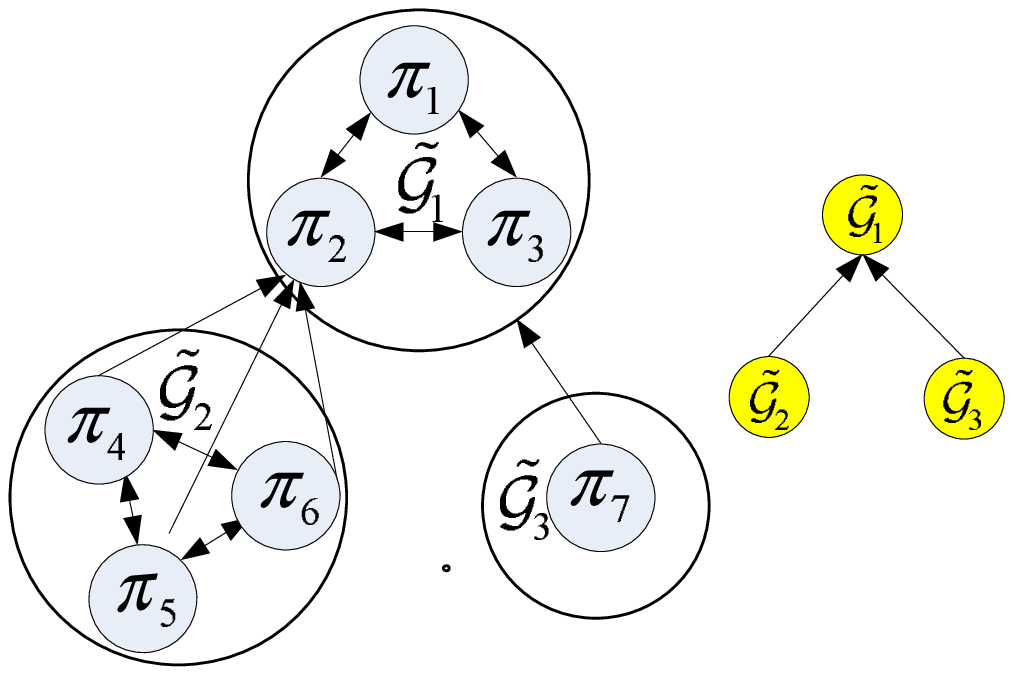}\label{fig:inducedrootedtree}}\quad
}
\caption{The corresponding edge-interconnection digraph with 3 strongly connected components $\tilde {\mathcal{G}}_{1},~\tilde {\mathcal{G}}_{2}$ and $\tilde {\mathcal{G}}_{3}$.}
\end{center}
\end{figure}

\begin{lemma}\label{lemma:stronglyconnectedISS}
\hl{The edge-interconnection graph $\tilde{ \mathcal{G}}$ is strongly connected if and only if  $\mathcal{G}$ is strongly connected digraph.}
\end{lemma}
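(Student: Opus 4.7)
The plan is to characterize each directed step in $\tilde{\mathcal{G}}$ in terms of the geometry of $\mathcal{G}$ and then translate paths back and forth. Directly from the construction preceding the lemma, an arrow $\pi_l \to \pi_k$ exists in $\tilde{\mathcal{G}}$ if and only if $\otimes(e_l)$ is a vertex of $e_k$, so every step in $\tilde{\mathcal{G}}$ is either a \emph{sibling} hop (which preserves the tail vertex and comes bidirectionally), or a \emph{cascade} hop $\pi_l \to \pi_k$ satisfying $\otimes(e_l) = \odot(e_k)$, i.e.\ one that records the directed $\mathcal{G}$-edge $e_k$ read head-first.

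For the forward direction, given two nodes $\pi_{e_a}, \pi_{e_b}$ of $\tilde{\mathcal{G}}$ with tail vertices $u_a = \otimes(e_a)$ and $u_b = \otimes(e_b)$, I would use strong connectivity of $\mathcal{G}$ to obtain a directed path $u_b = w_0 \to w_1 \to \cdots \to w_\ell = u_a$ with constituent edges $f_i = (w_{i-1}, w_i)$, and lift it to the $\tilde{\mathcal{G}}$-path $\pi_{e_a} \to \pi_{f_\ell} \to \pi_{f_{\ell-1}} \to \cdots \to \pi_{f_1} \to \pi_{e_b}$. Each intermediate arrow is a cascade hop because $\otimes(f_i) = w_{i-1} = \odot(f_{i-1})$, and the closing arrow $\pi_{f_1} \to \pi_{e_b}$ is a sibling hop since $\otimes(f_1) = u_b = \otimes(e_b)$. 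Degenerate subcases ($u_a = u_b$, or $e_a = e_b$) collapse to a single sibling edge or to nothing.

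For the reverse direction, I would fix $u, v \in \mathcal{V}$, pick edges $e_u, e_v$ with $\otimes(e_u) = u$ and $\otimes(e_v) = v$, invoke strong connectivity of $\tilde{\mathcal{G}}$ to obtain a path $\pi_{e_v} \to \pi_{g_1} \to \cdots \to \pi_{g_m} = \pi_{e_u}$, and record the sequence of tails $t_i = \otimes(g_i)$ with $t_0 = v$ and $t_m = u$. By the arrow characterization each step is either a sibling hop (preserving the tail, $t_{i-1} = t_i$) or a cascade hop, the latter contributing a directed $\mathcal{G}$-edge $t_i \to t_{i-1}$. Reading the cascade edges in order $i = m, m-1, \ldots, 1$ produces the required directed walk $u = t_m \to t_{m-1} \to \cdots \to t_0 = v$ in $\mathcal{G}$.

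The main obstacle lies in the reverse direction: the construction needs every vertex to admit an outgoing edge so that the seed edges $e_u, e_v$ can actually be chosen. This is not automatically guaranteed by strong connectivity of $\tilde{\mathcal{G}}$ alone, since a sink vertex in $\mathcal{G}$ can in principle leave $\tilde{\mathcal{G}}$ strongly connected on the remaining edges. To close the argument rigorously I would either invoke an implicit standing assumption that every agent has at least one outgoing edge (natural in the consensus setting of Section~4), or carry out a separate analysis at potential sinks to show that their presence would contradict strong connectivity of $\tilde{\mathcal{G}}$ once isolated vertices are ruled out. This sink-analysis is the delicate point on which I would spend the most care.
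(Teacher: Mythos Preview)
Your forward direction is correct and considerably more explicit than the paper's own argument, which is a two-sentence sketch asserting both implications without detail. The edge-by-edge lifting you describe is exactly the right way to make the paper's ``directed path between any pair of edges'' remark precise, and it goes through cleanly.

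For the reverse direction, your diagnosis of the sink obstruction is not a mere technicality to be tidied up---it is fatal to the statement as written, and your second proposed fix (arguing that a sink would contradict strong connectivity of $\tilde{\mathcal{G}}$) cannot succeed. Take $\mathcal{G}$ on vertices $\{1,2,3\}$ with edges $e_1=(1,2)$, $e_2=(2,1)$, $e_3=(1,3)$. Vertex $3$ is a sink, so $\mathcal{G}$ is not strongly connected. Yet $\tilde{\mathcal{G}}$ carries the sibling pair $\pi_1\leftrightarrow\pi_3$ together with the cascades $\pi_1\to\pi_2$, $\pi_2\to\pi_1$, $\pi_3\to\pi_2$, and every ordered pair of $\pi$-nodes is joined by a directed path; hence $\tilde{\mathcal{G}}$ is strongly connected while $\mathcal{G}$ is not. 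The paper's reverse implication is a bare assertion and does not confront this. The equivalence genuinely needs an additional hypothesis such as ``every vertex of $\mathcal{G}$ has positive out-degree'' (or ``$\mathcal{G}$ has no isolated vertices and no sinks''), so you should commit to your first option rather than attempt the sink analysis. Note, finally, that only the forward implication is invoked later in the paper (to justify applying the cyclic-small-gain theorem in the strongly connected case), and that direction is unaffected.
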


\begin{proof}
Clearly, for the strongly connected graph, there will be a directed path connecting any pair of edges $(e_k, e_l)$ in each direction. In that way, any two distinct nodes $(\pi_k, \pi_l)$ of $\tilde {\mathcal{G}}$ can be connected via a directed path; therefore, $\tilde {\mathcal{G}}$  is strongly connected. \hl{On the other hand, while $\tilde {\mathcal{G}}$ is strongly connected, it implies that, for any pair of nodes of $\mathcal{G}$, there always exists a directed path connecting them, i.e., $\mathcal{G}$ is also strongly connected.}
\end{proof}

\subsection{Main Results}

In this section, both strongly connected digraphs and quasi-strongly connected digraphs are considered. The cyclic-small-gain theorem is then employed to guarantee the robust consensus of the closed-loop multi-agent systems. Note that ISS cyclic-small-gain theorem can be directly applied if the underlying digraph is strongly connected from \cite{liu2011lyapunov}. However, for the quasi-strongly case, we will translate it into a two-subsystem interconnection structure.

\subsubsection{Strongly Connected Digraph}\label{case:sc}

A digraph $\mathcal{G}$ of a six-agent system is shown as an example in Figure \ref{fig:simulation-stronglyconnected}, and the corresponding edge-interconnection digraph is shown in Figure \ref{fig:edge-simulation-stronglyconnected}. From Lemma \ref{lemma:stronglyconnectedISS}, we know that the edge-interconnection digraph is strongly connected if $\mathcal{G}$ is strongly connected.

\begin{figure}[hbtp]
\begin{center}
\mbox{\subfigure[The strongly connected  digraph.]
{\includegraphics[height=38mm]{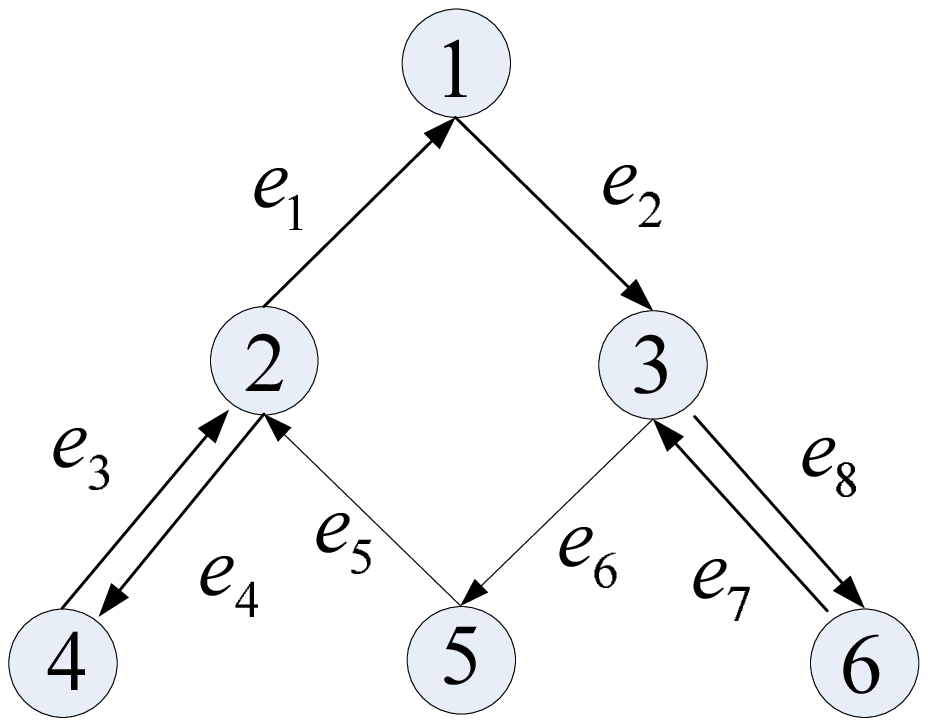}\label{fig:simulation-stronglyconnected}} \hspace{1cm}
\subfigure[The edge-interconnection digraph.]
{\includegraphics[height=45mm]{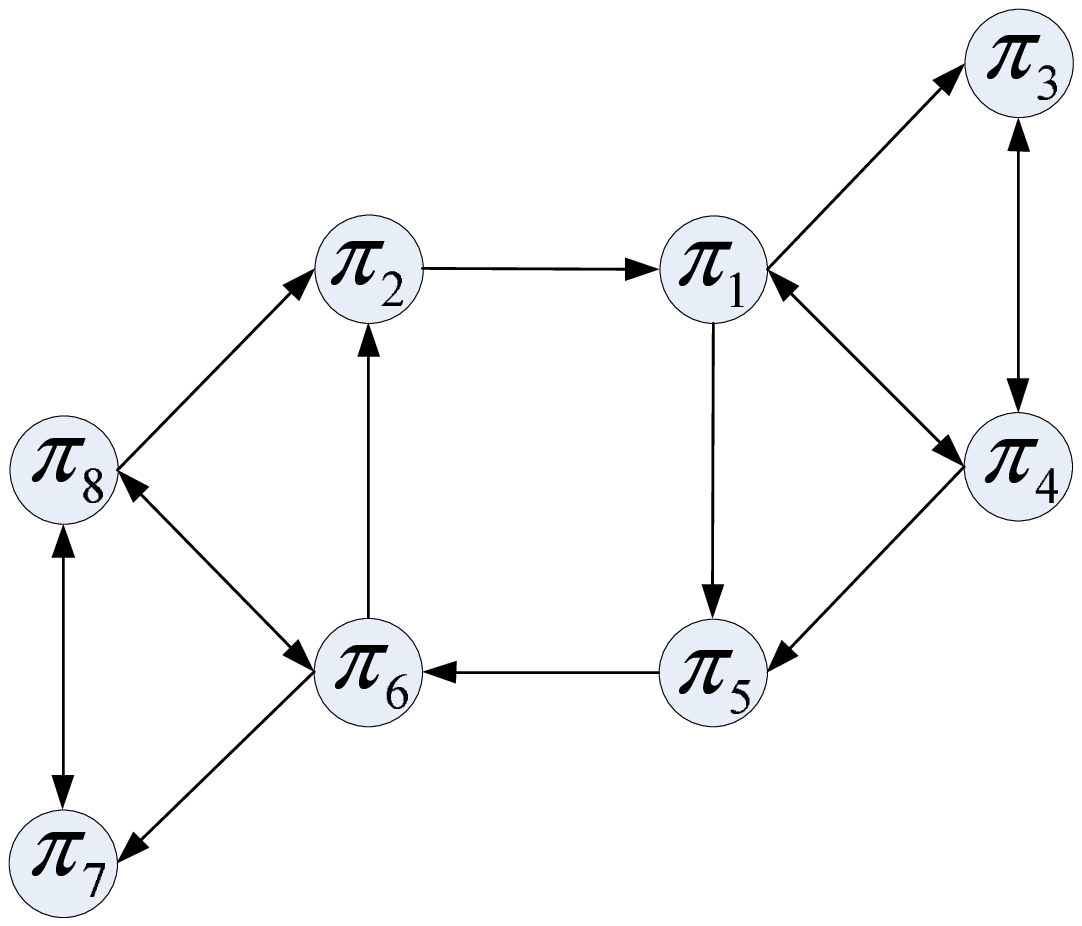}\label{fig:edge-simulation-stronglyconnected}}\quad
}
\caption{The original strongly connected digraph and the corresponding edge-interconnection digraph.}
\end{center}
\end{figure}

To begin our analysis, we define the following ISS-Lyapunov function candidate
\begin{align*}
V_{e_k }= {1 \over {2}} \tilde  x_{e_k }^T \tilde  x_{e_k },~~k =
1,2,\cdots,L
\end{align*}
and we denote $\Theta$ as the set of all simple loops (more details please refers to \cite{liu2011lyapunov}) of $\mathcal{G}$, and $A_o(\gamma _{\tilde x_{e_k } }^{\tilde x_{e_l } })$ as the product of the gain assigned to the edges of a simple loop $o \in \Theta$, where $\gamma _{\tilde x_{e_k } }^{\tilde x_{e_l } } \in K_\infty$  with $k = 1,2,\cdots, L$ and ${e_l  \in N_{e_k }^ \otimes }$.

The main result for the strongly connected digraph is given as follows.

\begin{theorem}\label{theorem:sc}
Assuming that the digraph is strongly connected, consider the subsystem (\ref{edge:finaldynamics}) with ${\tilde x_{e_k} }$ as the state, and $\tilde x_{e_l }\left( {e_l  \in N_{e_k }^ \otimes } \right)$ as the external inputs. For any specified constant $\sigma _{e_k }$ and $\gamma _{\tilde x_{e_k } }^{\tilde x_{e_l } }  \in K_\infty$  (${e_l  \in N_{e_k }^ \otimes }$), we can design
\begin{align}\label{virtual:controllawdesign}
u_{e_k }= - {{\tilde x_{e_k } } \over {\left| {\tilde x_{e_k } } \right|}}\left[ {\eta \left| {\tilde x_{e_k } } \right| + \sum\limits_{e_l  \in N_{e_k }^ \otimes } {\left| {\left[ {A_e } \right]_{kl} } \right|} \rho _{\tilde x_{e_l } }^{\tilde x_{e_k } } \left( {\left| {\tilde x_{e_k } } \right|} \right)} + \sqrt{2}\xi \right] + \left( {1 - {{\sigma _{e_k } } \over 2}} \right)\tilde x_{e_k }  \nonumber \\
\end{align}
with $\rho _{\tilde x_{e_l } }^{\tilde x_{e_k } }  = \underline{\alpha} ^{ - 1}  \circ \left( {\gamma _{\tilde x_{e_k } }^{\tilde x_{e_l } } } \right)^{ - 1}  \circ \overline{\alpha} \left( s \right)$ such that the subsystem (\ref{edge:finaldynamics}) is  ISS with ISS-Lyapunov function $V_{e_k }$ satisfying
$$
V_{e_k }  \ge \mathop {\max }\limits_{e_l  \in N_{e_k }^ \otimes } \left\{ {\gamma _{\tilde x_{e_k } }^{\tilde x_{e_l } } \left( {V_{e_l } } \right)} \right\} \Rightarrow \nabla V_{e_k } \dot {\tilde x}_{e_k }  \le  - \sigma _{e_k } V_{e_k }.
$$

If the required cyclic-small-gain condition (\ref{align:cyclic-small-gain}) is satisfied, the composed system is ISS.
Then, the objective robust consensus can be achieved by using the distributed consensus protocol (\ref{protocolA1}) with
\begin{equation}\label{auxiliary:controllaw}
{\rm{u = }}[{ E}^T]^{\dag} u_e.
\end{equation}
\end{theorem}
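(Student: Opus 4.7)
The plan is to verify the ISS-Lyapunov condition for each edge subsystem using the candidate $V_{e_k}=\tfrac{1}{2}\tilde x_{e_k}^T\tilde x_{e_k}$ (so that $\underline{\alpha}(s)=\overline{\alpha}(s)=\tfrac{1}{2}s^{2}$), then invoke Lemma \ref{lem:CSG} on the edge-interconnection digraph, and finally translate edge-level asymptotic convergence into vertex-level robust consensus via the relation $\tilde x_e = E^T x$.

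First I would differentiate $V_{e_k}$ along \eqref{edge:finaldynamics} and substitute the proposed control \eqref{virtual:controllawdesign}. The three raw terms to absorb are: (i) the nonlinear drift $\tilde x_{e_k}^T\bigl[f(t,x_{\otimes(e_k)})-f(t,x_{\odot(e_k)})\bigr]$, bounded by $\eta|\tilde x_{e_k}|^{2}$ through Assumption 1 and Cauchy--Schwarz; (ii) the disturbance term $\tilde x_{e_k}^T[E^T]_k w$, bounded by $\sqrt{2}\,\xi\,|\tilde x_{e_k}|$ since each row of $E^T$ has exactly two nonzero entries of magnitude one; and (iii) the coupling $-\sum_{e_l\in N_{e_k}^\otimes}[A_e]_{kl}\tilde x_{e_k}^T\tilde x_{e_l}$, bounded above by $\sum|[A_e]_{kl}|\,|\tilde x_{e_k}|\,|\tilde x_{e_l}|$. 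The three corresponding negative pieces of $\tilde x_{e_k}^T u_{e_k}$ are designed to exactly cancel (i), (ii), and to bring (iii) into a form that matches the gain‐implication argument; what remains is the $(1-\sigma_{e_k}/2)|\tilde x_{e_k}|^{2}$ damping term, giving
\begin{equation*}
\dot V_{e_k}\;\le\;-\tfrac{\sigma_{e_k}}{2}|\tilde x_{e_k}|^{2}\;+\;\sum_{e_l\in N_{e_k}^\otimes}|[A_e]_{kl}|\,|\tilde x_{e_k}|\bigl(|\tilde x_{e_l}|-\rho_{\tilde x_{e_l}}^{\tilde x_{e_k}}(|\tilde x_{e_k}|)\bigr).
\end{equation*}
Invoking the hypothesis $V_{e_k}\ge \gamma_{\tilde x_{e_k}}^{\tilde x_{e_l}}(V_{e_l})$ for every $e_l\in N_{e_k}^\otimes$ and unwinding the definition $\rho_{\tilde x_{e_l}}^{\tilde x_{e_k}}=\underline\alpha^{-1}\circ(\gamma_{\tilde x_{e_k}}^{\tilde x_{e_l}})^{-1}\circ\overline\alpha$ yields $|\tilde x_{e_l}|\le \rho_{\tilde x_{e_l}}^{\tilde x_{e_k}}(|\tilde x_{e_k}|)$, so the bracketed term is non-positive and $\dot V_{e_k}\le-\sigma_{e_k}V_{e_k}$, exactly the ISS-Lyapunov condition required by Lemma \ref{lem:CSG}.

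With the ISS-Lyapunov property verified for every one of the $L$ edge subsystems, and with the assumed cyclic-small-gain condition \eqref{align:cyclic-small-gain} holding on every simple loop of the edge-interconnection digraph (which, by Lemma \ref{lemma:stronglyconnectedISS}, is strongly connected whenever $\mathcal{G}$ is), Lemma \ref{lem:CSG} immediately gives that the composed edge system is ISS. Because the disturbance has been completely dominated inside each $u_{e_k}$ via the $\sqrt 2\,\xi$ sign-like term, the effective exogenous input vanishes, and one concludes $\tilde x_e(t)\to 0$ as $t\to\infty$. By \eqref{equation:edgeincidence} and the fact that $\mathcal{N}(E^T)=\operatorname{span}\{\mathbf{1}\}$ for a connected digraph, $\tilde x_e\to 0$ is equivalent to $x_i-x_j\to 0$ for every $i,j$, which, together with the class-$\mathcal{K}$ dependence on the residual disturbance bound, establishes the robust consensus property in the required $\ell(\|w\|_\infty)$ sense.

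The final step is to justify that the vertex-level control $u=[E^T]^\dag u_e$ actually realizes the designed virtual control $u_e$ inside the edge dynamics. Here I would argue that, because $u_e$ is computed from $\tilde x_e\in\mathcal{R}(E^T)$ and the designed feedback terms are themselves aligned with the edge subspace (the sign, damping, and gain-inversion terms inherit the edge structure), the pseudoinverse identity $E^T[E^T]^\dag u_e = u_e$ holds on the invariant manifold of the edge dynamics, so substituting $u=[E^T]^\dag u_e$ into \eqref{vertexdynamics} reproduces precisely \eqref{edge:finaldynamics} with the designed $u_{e_k}$ in place. The main obstacle I anticipate is this last point: the discontinuous nature of the sign-like correction and the non-uniqueness of the preimage under $E^T$ make the equivalence $E^T u = u_e$ delicate, and one must lean on the fact that $\tilde x_e$ stays in $\mathcal{R}(E^T)$ (cf.\ Lemma \ref{lemma:pseudo-inverse}) so that the pseudoinverse picks a consistent distributed implementation; handling this rigorously, together with the standard Filippov/regularization workaround for the singularity of $\tilde x_{e_k}/|\tilde x_{e_k}|$ at the origin, is where most of the technical care is needed.
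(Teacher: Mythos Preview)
Your argument is correct and follows essentially the same route as the paper: compute $\dot V_{e_k}$ along \eqref{edge:finaldynamics}, bound the Lipschitz, disturbance, and coupling terms, use the gain hypothesis to absorb the neighbors via $\rho_{\tilde x_{e_l}}^{\tilde x_{e_k}}$, substitute \eqref{virtual:controllawdesign} to obtain $\dot V_{e_k}\le -\sigma_{e_k}V_{e_k}$, and then apply the cyclic-small-gain theorem on the (strongly connected) edge-interconnection digraph. Two minor differences are worth noting. First, the paper phrases the conclusion as the ISS estimate $|\tilde x_e(t)|\le\tilde\beta(|\tilde x_e^0|,t)+\tilde\gamma(\|w\|_\infty)$ and reads off robust consensus from the asymptotic bound, whereas you observe (correctly) that the $\sqrt{2}\xi$ sign-type term removes $w$ from the ISS-Lyapunov implication, so the composed edge system is in fact $0$-GAS; your conclusion is stronger but still yields the stated robust-consensus bound. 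Second, you flag the solvability of $E^Tu=u_e$ and the Filippov regularization of $\tilde x_{e_k}/|\tilde x_{e_k}|$; the paper does not address either point, so your treatment is at least as careful. Your heuristic that ``$u_e$ is built from $\tilde x_e\in\mathcal R(E^T)$ hence lies in $\mathcal R(E^T)$'' is not a proof (nonlinear maps need not preserve a subspace), but this gap is present in the paper as well and does not distinguish your approach from it.
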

\begin{proof}

Since $V_{e_k }  \ge \mathop {\max }\limits_{e_l  \in N_{e_k }^\otimes } \left\{ {\gamma _{\tilde x_{e_k } }^{\tilde x_{e_l } } \left( {V_{e_l } } \right)} \right\}$, we have
\begin{equation*}
\left| {\tilde x_{e_l } } \right| \le \rho _{\tilde x_{e_l } }^{\tilde x_{e_k } } \left( {\left| {\tilde x_{e_k } } \right|} \right),{e_l  \in N_{e_k }^ \otimes. }
\end{equation*}

Note that $\left|\left[E^T\right]_k\right| =  \sqrt{2} $ and by taking the derivative of $V_{e_k }$, we have
\begin{align*}
\nabla V_{e_k } \dot {\tilde x}_{e_k }
& =\tilde x_{e_k }^T \dot {\tilde x}_{e_k } \\
& = \tilde x_{e_k}^T[ u_{e_k }+ \left( {f \left( {t,x_{\otimes \left( e_k \right) } } \right) - f \left( {t,x_{\odot \left( e_k \right) } } \right)} \right)+ [E^T]_k w   - \left( {\tilde x_{e_k }  + \sum\limits_{e_l  \in N_{e_k }^ \otimes } {\left[ {A_e } \right]_{kl} } \tilde x_{e_l } } \right)]  \\
& \le \tilde x_{e_k }^T \left( {u_{e_k } - \tilde x_{e_k } } \right) + \eta\left| {\tilde x_{e_k }^T } \right|\left| {\tilde x_{e_k } } \right| + \sqrt{2}\xi \left| {\tilde x_{e_k }^T } \right| - \tilde x_{e_k }^T \sum\limits_{e_l  \in N_{e_k }^ \otimes } {\left[ {A_e } \right]_{kl} } \tilde x_{e_l } \\
& \le \tilde x_{e_k }^T \left( {u_{e_k } - \tilde x_{e_k } } \right) + \eta\left| {\tilde x_{e_k }^T } \right|\left| {\tilde x_{e_k } } \right| +\sqrt{2}\xi \left| {\tilde x_{e_k }^T } \right|  + \left| {\tilde x_{e_k }^T } \right|\sum\limits_{e_l  \in N_{e_k }^ \otimes } {\left| {\left[ {A_e } \right]_{kl} } \right|} \rho _{\tilde x_{e_l } }^{\tilde x_{e_k } } \left( {\left| {\tilde x_{e_k } } \right|} \right).
\end{align*}
Using (\ref{virtual:controllawdesign}), we have
\begin{align*}
\nabla V_{e_k } \dot {\tilde x}_{e_k }  \le  - {{\sigma _{e_k } } \over 2} \tilde x_{e_k }^T \tilde x_{e_k }  =  - \sigma _{e_k } V_{e_k }
\end{align*}
which implies that the $\tilde {x}_k$-subsystem is ISS.

Since the induced edge-interconnection digraph is strongly connected, the ISS cyclic-small-gain theorem can be directly implemented. If the following cyclic-small-gain condition is satisfied
\begin{align*}
 A_o(\gamma _{\tilde x_{e_k } }^{\tilde x_{e_l } })< \mathrm{Id}
\end{align*}
then the composed system \eqref{edge:subsystem} is ISS.

It should be mentioned that as $u_e$ is designed, the whole system ${\tilde x_e\left( t \right)}$ is unforced. Based on the ISS property, we have
\begin{equation*}
\left| {\tilde x_e\left( t \right)} \right| \le \tilde \beta \left( {\left| {\tilde x^0 } \right|,t} \right) + \tilde \gamma \left( {\left\| w \right\|_\infty  } \right),t \ge 0
\end{equation*}
where $\tilde x_e^0$ is the initial state of $\tilde x_e\left( t \right)$ and $\tilde \beta \in \kl $, $\tilde \gamma \in \kf$. Obviously, as $t \rightarrow \infty$, we have $\tilde \beta \left( {\left| {\tilde x^0 } \right|,t} \right) = 0$. So that
\begin{align*}
\mathop {\lim }\limits_{t \to \infty } \left| {\tilde x_e\left( t \right)} \right| \le \tilde \gamma \left( {\left\| w \right\|_\infty  } \right),
\end{align*}
which implies the robust consensus. Since $\mathcal{G}$ has a spanning tree, the pseudoinverse $[{ E}^T]^{\dag}$ exists. Then we can obtain the implemented consensus control input by using \eqref{auxiliary:controllaw}. The proof is concluded.
\end{proof}

\subsubsection{Quasi-strongly Connected Digraph}\label{case:qsc}

\begin{figure}[hbtp]
\begin{center}
\mbox{\subfigure[The quasi-strongly connected  digraph.]
{\includegraphics[height=38mm]{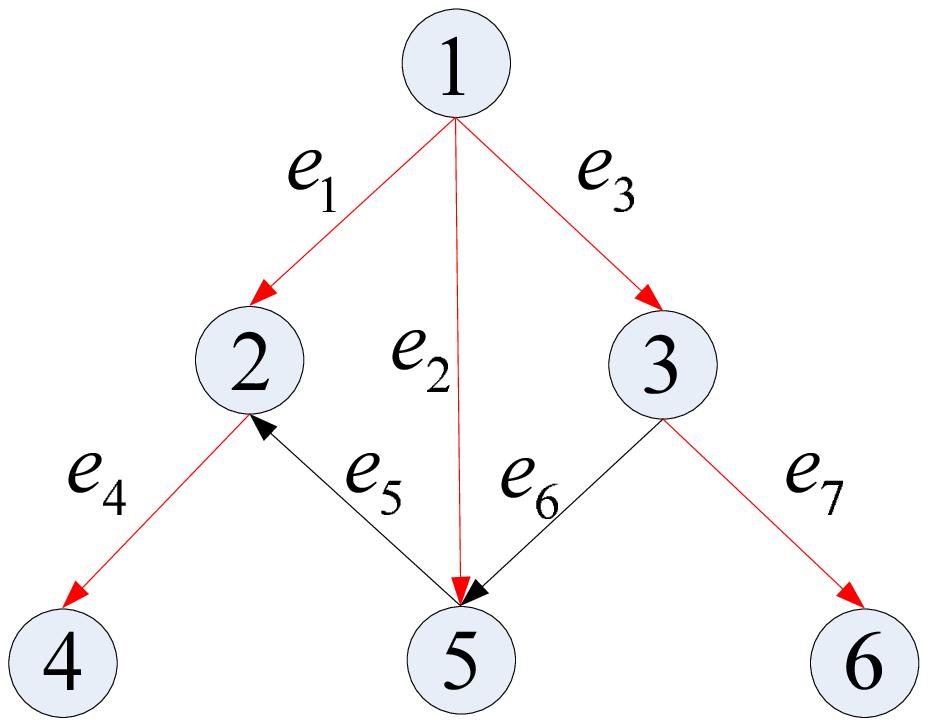}\label{fig:simulation-spanningtree}} \hspace{1cm}
\subfigure[The edge-interconnection digraph.]
{\includegraphics[height=42mm]{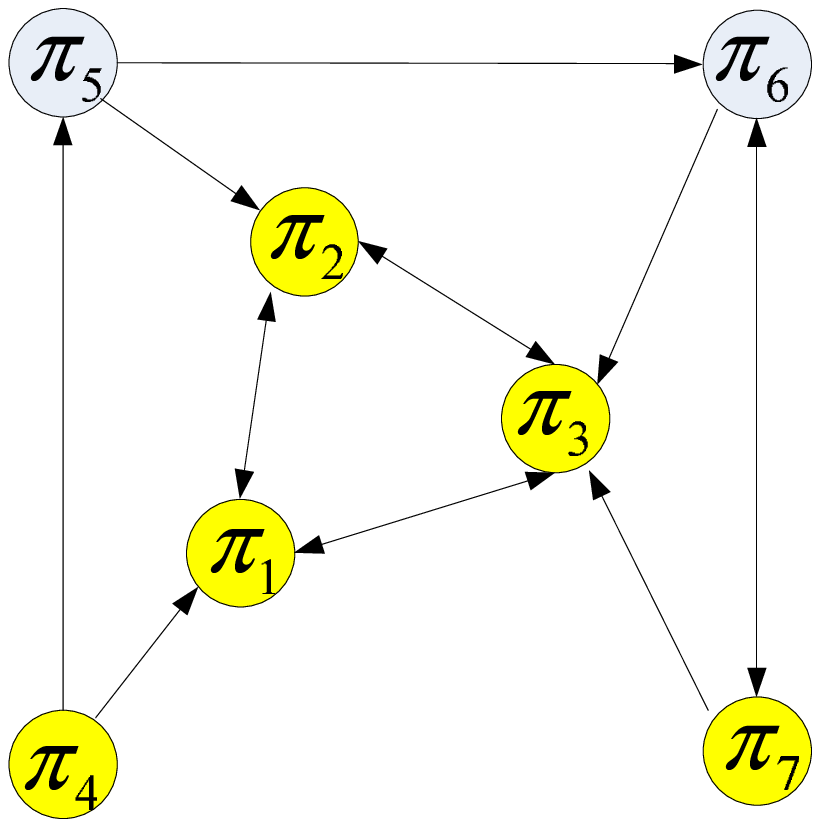}\label{fig:edge-simulation-spanningtree}}\quad
}
\caption{The quasi-strongly connected digraph and the corresponding edge-interconnection digraph.}
\end{center}
\end{figure}

In this section, we consider a quasi-strongly connected digraph $\mathcal{G} = \mathcal{G}_\mathcal{T}  \cup \mathcal{G}_\mathcal{C}$. An example is given as Figure \ref{fig:simulation-spanningtree}. The edges of $\mathcal{G}_\mathcal{T}$ are marked \ul{as} red. Accordingly, the edge-interconnection digraph is shown in Figure \ref{fig:edge-simulation-spanningtree}, which  consists of two parts: $\tilde {\mathcal{G}} = \tilde{\mathcal{G}}_\mathcal{T}  \cup \tilde{\mathcal{G}}_\mathcal{C}$. Correspondingly, the edge Laplacian dynamics system $\tilde x_e(t) $ can be modeled as the interconnection of the $H_\mathcal{T}$-subsystem and the $H_\mathcal{C}$-subsystem based on $\tilde{\mathcal{G}}_\mathcal{T}$ and $\tilde{\mathcal{G}}_\mathcal{C}$.

As previously mentioned, the incidence matrix can be rewritten as $ E = \left[ {\begin{matrix}{E_\mathcal{T} } & {E_\mathcal{C} } \end{matrix}} \right]$, and the edge Laplacian can be represented as the block form $ L_e = \left[ {\begin{matrix}
   {{ L_{e1}}} & {{ L_{e2}}}  \cr
   {{ L_{e3}}} & {{ L_{e4}}}  \cr
\end{matrix} } \right]$ in line with the permutation. Therefore, the edge Laplacian dynamics $\tilde x_e(t)$ described by (\ref{edge:subsystem}) can be translated into the following form:
\begin{align*}
H_\mathcal{T} :\dot {\tilde x}_\mathcal{T} \left( t \right) = \mathcal{F}_\mathcal{T}\left( {t,x} \right) -  L_{e_1 } \tilde x_\mathcal{T} \left( t \right) -  L_{e_2 } \tilde x_\mathcal{C} \left( t \right) +  E_\mathcal{T}^T w \left( t \right) +  u_\mathcal{T} \left( t \right)
\end{align*}
with $\mathcal{F}_\mathcal{T}\left( {t,x} \right) =   E_\mathcal{T}^T \mathcal{F}\left( {t,x} \right)$, $u_\mathcal{T}\left( t \right) = E_\mathcal{T}^T u\left( t \right) $.
\begin{align*}
H_\mathcal{C}: \dot {\tilde x}_\mathcal{C} \left( t \right) = \mathcal{F}_\mathcal{C}\left( {t,x} \right) -  L_{e_4 } \tilde x_\mathcal{C} \left( t \right) -  L_{e_3 } \tilde x_\mathcal{T} \left( t \right)+  E_\mathcal{C}^T w \left( t \right) + u_\mathcal{C}\left( t \right)
\end{align*}
with $\mathcal{F}_\mathcal{C}\left( {t,x} \right) = E_\mathcal{T}^T \mathcal{F}\left( {t,x} \right)$, $u_\mathcal{C}\left( t \right) = E_\mathcal{C}^T u\left( t \right) $. Besides, $E_\mathcal{T}^T w$ and $E_\mathcal{C}^T w$ indicate unknown but bounded disturbances on $\tilde{\mathcal{G}}_\mathcal{T}$ and $\tilde{\mathcal{G}}_\mathcal{C}$, respectively. The interacting system is shown in Figure \ref{figure:spanning-cospanning interconnection}.

\begin{figure}[hbtp]
\centering
{\includegraphics[height=3.0cm]{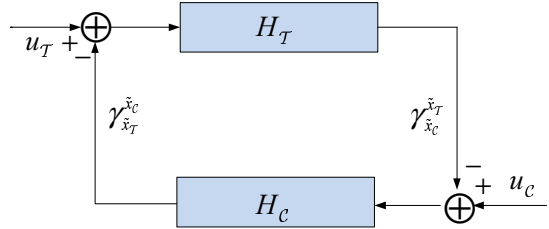}}
\caption{The interconnection of subsystems $H_t$ and $H_c$.}
\label{figure:spanning-cospanning interconnection}
\end{figure}

Obviously, for a specific spanning tree, it contains $N-1$ edges. For subsystem $\tilde x_{\mathcal{T}_k}, \text{~with~} k =1,2,\cdots,N-1$, $e_l ,e_k  \in \mathcal{G}_\mathcal{T}$, we have
\begin{align}\label{align:spanningtreext}
\dot {\tilde x}_{\mathcal{T}_k}   = F_{\mathcal{T}_k } - \left( {\tilde x_{\mathcal{T}_k }  + \sum\limits_{ e_l  \in N_{e_k }^ \otimes }  {\left[ { {A}_{e_1} } \right]_{kl} } \tilde x_{\mathcal{T}_l }} \right)  - Le_{2_k } \tilde x_\mathcal{C} +  \left[ E_\mathcal{T}^T\right]_k w  + u_{\mathcal{T}_k }.
\end{align}

We choose the following ISS-Lyapunov candidates:
\begin{align}
& V_{\mathcal{T}_k}  = {1 \over 2} \tilde{x}_{\mathcal{T}_k}^T \tilde {x}_{\mathcal{T}_k},~~~ k = 1,2,\cdots,N-1 \label{lyap:Vtk} \\
& V_\mathcal{T}  = {1 \over 2} \tilde x_\mathcal{T}^T \tilde x_\mathcal{T}\label{lyap:Vt} \\
& V_\mathcal{C}  = {1 \over 2} \tilde x_\mathcal{C}^T \tilde x_\mathcal{C}. \label{lyap:Vc}
\end{align}

Denote $\Theta_\mathcal{T}$ as the set of all the simple loops of $\tilde{\mathcal{G}}_\mathcal{T}$, and denote $A_{\mathcal{T}_o}(\gamma _{\tilde x_{t_k } }^{\tilde x_{t_l } })$ as the product of the gain assigned to the edges of a simple loop in $\Theta_\mathcal{T}$. Let $\bar \lambda_1$ denotes the smallest nonzero eigenvalue of matrix $TT^T$, where $T$ is defined in \eqref{euqation:T}.

We then present the main result for the quasi-strongly connected digraph as follows:
\begin{theorem}\label{lemma:subsystemsISSforspanningtree}
Assuming that the digraph is quasi-strongly connected, consider the subsystem (\ref{align:spanningtreext}) with ${\tilde x_{\mathcal{T}_k } }\left(e_k  \in \mathcal{G}_\mathcal{T} \right)$ as the internal state. Let $\tilde x_{\mathcal{T}_l}\left( {e_l\in {N_{e_k }^ \otimes \cap \mathcal{G}}_\mathcal{T}} \right) $ and $ \tilde x_\mathcal{C} \left( e_c  \in  {\mathcal{G}}_\mathcal{C} \right)$ be the external inputs. For any specified constant $\sigma _{ \mathcal{T}_k } > 0$ and $\gamma _{\tilde x_{\mathcal{T}_k } }^{ \tilde x_{\mathcal{T}_l } }, \gamma _{\tilde x_{\mathcal{T}_k } }^{\tilde x_\mathcal{C} }  \in k_\infty$, we can design
\begin{align}\label{virtual:controllawdesignforspanningtree}
&  u_{\mathcal{T}_k }= \nonumber \\
&  - {{\tilde x_{\mathcal{T}_k } } \over {\left| {\tilde x_{\mathcal{T}_k } } \right|}}\left[ {\eta \left| {\tilde  x_{\mathcal{T}_k } } \right| + \sum\limits_{e_l  \in N_{e_k }^ \otimes } {\left| {\left[ { A_{e_1} } \right]_{kl} } \right|} \rho _{\tilde  x_{\mathcal{T}_l } }^{\tilde  x_{\mathcal{T}_k } } \left( {\left| {\tilde  x_{\mathcal{T}_k } } \right|} \right)}+  \left|  L_{{e_2}_k}  \right| \rho _{\tilde  x_{\mathcal{C} } }^{\tilde  x_{\mathcal{T}_k }} \left( {\left| {\tilde  x_{\mathcal{T}_k } } \right|} \right) + \sqrt{2}\xi \right] + \left( {1 - {{\sigma _{\mathcal{T}_k } } \over 2}} \right) \tilde  x_{\mathcal{T}_k }
\end{align}
with $\rho _{\tilde x_{\mathcal{T}_l } }^{\tilde x_{\mathcal{T}_k } }  = \underline{\alpha} ^{ - 1}  \circ \left( {\gamma _{ \tilde x_{\mathcal{T}_k } }^{\tilde x_{\mathcal{T}_l } } } \right)^{ - 1}  \circ \overline{\alpha} \left( s \right)$ and $\rho _{\tilde x_\mathcal{C} }^{\tilde x_{\mathcal{T}_k } }  = \underline{\alpha} ^{ - 1}  \circ \left( {\gamma _{ \tilde x_{\mathcal{T}_k } }^{\tilde x_\mathcal{C} } } \right)^{ - 1}  \circ \overline{\alpha} \left( s \right)$, such that the subsystem (\ref{align:spanningtreext}) is  ISS with an ISS-Lyapunov function $V_{\mathcal{T}_k }$ satisfying
\begin{align*}
V_{\mathcal{T}_k }  \ge \mathop {\max }\limits_{ e_l  \in N_{e_k }^ \otimes} \left\{ {\gamma _{\tilde x_{\mathcal{T}_k } }^{\tilde x_{\mathcal{T}_l } } \left( {V_{\mathcal{T}_l } } \right) }, {\gamma _{\tilde x_{\mathcal{T}_k } }^{\tilde x_\mathcal{C} } \left( {V_\mathcal{C} } \right) } \right\} \Rightarrow \nabla V_{\mathcal{T}_k } \dot {\tilde x}_{\mathcal{T}_k }  \le  - \sigma _{\mathcal{T}_k } V_{\mathcal{T}_k }.
\end{align*}
If the cyclic-small-gain condition \eqref{align:cyclic-small-gain} as well as
\begin{align}\label{align:smallgain}
\gamma _{\tilde x_{\mathcal{T}_k } }^{\tilde x_\mathcal{C} }  < {1 \over {{ \bar \lambda_1} \left( {N - 1} \right)}}
\end{align}
are satisfied, the whole system is ISS. Then the objective robust consensus can be achieved by using the protocol \eqref{protocolA1} with
\begin{equation}\label{auxiliary:controllaw1}
{\rm{u = }}\left[ { E_\mathcal{T}^T} \right]^{\dag}  u_\mathcal{T}.
\end{equation}
\end{theorem}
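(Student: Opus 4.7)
The plan is to follow the same architecture as Theorem~\ref{theorem:sc}, but now the edge dynamics split into an $H_\mathcal{T}$ subsystem (tree edges) and an $H_\mathcal{C}$ subsystem (cospanning edges), and the new ingredient is the algebraic dependence $\tilde x_\mathcal{C} = T^T \tilde x_\mathcal{T}$ inherited from Lemma~\ref{lemma:pseudo-inverse}, which ties the two blocks together.

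First, I would establish the per-edge ISS-Lyapunov estimate for each tree subsystem $\tilde x_{\mathcal{T}_k}$ by differentiating $V_{\mathcal{T}_k}$ along \eqref{align:spanningtreext}. Using Assumption~1 to bound the nonlinear term $F_{\mathcal{T}_k}$ by $\eta|\tilde x_{\mathcal{T}_k}|$, noting that $|[E_\mathcal{T}^T]_k w|\le\sqrt{2}\xi$, and converting the two hypotheses $V_{\mathcal{T}_k}\ge\gamma_{\tilde x_{\mathcal{T}_k}}^{\tilde x_{\mathcal{T}_l}}(V_{\mathcal{T}_l})$ and $V_{\mathcal{T}_k}\ge\gamma_{\tilde x_{\mathcal{T}_k}}^{\tilde x_\mathcal{C}}(V_\mathcal{C})$ into pointwise bounds $|\tilde x_{\mathcal{T}_l}|\le \rho_{\tilde x_{\mathcal{T}_l}}^{\tilde x_{\mathcal{T}_k}}(|\tilde x_{\mathcal{T}_k}|)$ and $|\tilde x_\mathcal{C}|\le \rho_{\tilde x_\mathcal{C}}^{\tilde x_{\mathcal{T}_k}}(|\tilde x_{\mathcal{T}_k}|)$, plugging in the control law \eqref{virtual:controllawdesignforspanningtree} will cancel exactly the terms arising from the Lipschitz bound, the sibling/parent couplings, the cospanning-edge coupling, and the disturbance, leaving $\nabla V_{\mathcal{T}_k}\dot{\tilde x}_{\mathcal{T}_k}\le -\sigma_{\mathcal{T}_k}V_{\mathcal{T}_k}$. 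This is the stated ISS-Lyapunov inequality and is a direct analogue of the computation in Theorem~\ref{theorem:sc}.

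Second, I would aggregate the $N-1$ per-edge estimates into ISS of $H_\mathcal{T}$ with $\tilde x_\mathcal{C}$ and $w$ as inputs. By Lemma~\ref{Lemma:treetoGe}, the edge-interconnection digraph $\tilde{\mathcal{G}}_\mathcal{T}$ of a spanning tree consists of strongly connected components coupled through cascades and parallel connections; the cyclic-small-gain condition \eqref{align:cyclic-small-gain} applied inside each strongly connected component (via Lemma~\ref{lem:CSG}) together with standard ISS propagation along the cascades yields $\tilde\beta_\mathcal{T}\in\kl$ and $\tilde\gamma_\mathcal{T}^\mathcal{C},\tilde\gamma_\mathcal{T}^w\in\kf$ such that $|\tilde x_\mathcal{T}(t)|\le \tilde\beta_\mathcal{T}(|\tilde x_\mathcal{T}(0)|,t)+\tilde\gamma_\mathcal{T}^\mathcal{C}(\|\tilde x_\mathcal{C}\|_\infty)+\tilde\gamma_\mathcal{T}^w(\|w\|_\infty)$. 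The third step closes the loop via the algebraic identity $E_\mathcal{C}=E_\mathcal{T} T$ from Lemma~\ref{lemma:pseudo-inverse}: since $\tilde x_e=E^T x$, we obtain $\tilde x_\mathcal{C}=E_\mathcal{C}^T x = T^T E_\mathcal{T}^T x = T^T \tilde x_\mathcal{T}$, so $\tilde x_\mathcal{C}$ is not an independent signal but algebraically subordinate to $\tilde x_\mathcal{T}$, with its norm controlled by the spectrum of $TT^T$. The numerical condition \eqref{align:smallgain} is precisely what makes the composite loop gain strictly less than the identity once this algebraic amplification, together with the $(N-1)$ factor from summing across tree edges, is folded into the gain $\gamma_{\tilde x_{\mathcal{T}_k}}^{\tilde x_\mathcal{C}}$; a final small-gain step then collapses the coupled system into ISS of $\tilde x_\mathcal{T}$ with $w$ alone, and $\tilde x_\mathcal{C}$ inherits ISS through $T^T$.

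Finally, robust consensus and realizability are handled exactly as in Theorem~\ref{theorem:sc}: letting $t\to\infty$ kills the $\kl$-term and yields $\limsup_{t\to\infty}|\tilde x_e(t)|\le \ell(\|w\|_\infty)$ for some $\ell\in\kf$, which is robust consensus in the edge coordinates and therefore in the node differences; and Lemma~\ref{lemma:pseudo-inverse} guarantees $E_\mathcal{T}$ has full column rank, so $[E_\mathcal{T}^T]^\dag$ exists and $u=[E_\mathcal{T}^T]^\dag u_\mathcal{T}$ is a well-defined node-level input reproducing $u_\mathcal{T}=E_\mathcal{T}^T u$. The main obstacle is the third step: unlike the strongly connected case where the cyclic-small-gain theorem applies on the full edge-interconnection digraph in one shot, here $\tilde x_\mathcal{C}$ plays a double role (external input in the per-edge ISS analysis, but algebraic image of $\tilde x_\mathcal{T}$ globally), and reconciling the freely chosen nonlinear gain $\gamma_{\tilde x_{\mathcal{T}_k}}^{\tilde x_\mathcal{C}}$ with the linear amplification $\|T^T\|$ is exactly what the seemingly technical bound \eqref{align:smallgain} encodes.
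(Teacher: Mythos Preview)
Your proposal is correct and follows essentially the same three-step architecture as the paper: per-edge ISS for each $\tilde x_{\mathcal{T}_k}$ via the Lyapunov derivative computation, aggregation to ISS of $H_\mathcal{T}$ through Lemma~\ref{Lemma:treetoGe} plus cyclic-small-gain inside the strongly connected components, and closure of the $H_\mathcal{T}$--$H_\mathcal{C}$ loop using the algebraic relation $\tilde x_\mathcal{C}=T^T\tilde x_\mathcal{T}$ with $\bar\lambda_1$ from the spectrum of $TT^T$ and the $(N-1)$ factor yielding exactly condition~\eqref{align:smallgain}. The paper carries out Step~3 by explicitly exhibiting $V_\mathcal{T}$ and $V_\mathcal{C}$ as ISS-Lyapunov functions for the two aggregate subsystems and reading off the interconnection gains $\gamma_{\tilde x_\mathcal{T}}^{\tilde x_\mathcal{C}}=(N-1)\gamma_{\tilde x_{\mathcal{T}_k}}^{\tilde x_\mathcal{C}}$ and $\gamma_{\tilde x_\mathcal{C}}^{\tilde x_\mathcal{T}}=\bar\lambda_1$, but your description of this step as folding the algebraic amplification into a two-block small-gain is the same mechanism.
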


\begin{proof}
The main proof procedure \ul{contains} three steps. Firstly, for the ${\tilde x}_{\mathcal{T}_k}$-subsystem, the ISS properties can be guaranteed by taking \eqref{virtual:controllawdesignforspanningtree} as the control law. Second, the ISS properties of the upside subsystem $H_\mathcal{T}$ are then proven by utilizing the ISS cyclic-small-gain theorem. Finally, we prove that the downside subsystem $H_\mathcal{C}$ is ISS, and the objective robust consensus can be achieved while the small gain condition $\gamma _{\tilde x_\mathcal{T} }^{\tilde x_\mathcal{C} }  \circ \gamma _{\tilde x_\mathcal{C} }^{\tilde x_\mathcal{T} }  < \mathrm{Id}$ is satisfied.

\emph{Step 1}:

Using the Lyapunov candidate defined in (\ref{lyap:Vtk}) and considering $V_{\mathcal{T}_k } \ge \mathop {\max }\limits_{ e_l  \in N_{e_k }^ \otimes } \left\{ {\gamma _{\tilde x_{\mathcal{T}_k } }^{\tilde x_{\mathcal{T}_l } } \left( {V_{\mathcal{T}_l } } \right) }, {\gamma _{\tilde x_{\mathcal{T}_k } }^{\tilde x_{\mathcal{C}} } \left( {V_{\mathcal{C} } } \right) } \right\}$ with ${e_l ,e_k  \in \mathcal{G}_\mathcal{T}, e_c  \in \mathcal{G}_\mathcal{C} }$, we have
\begin{align*}
& \left| {\tilde x_\mathcal{C} } \right| \le \rho _{\tilde x_\mathcal{C} }^{\tilde x_{\mathcal{T}_k } } \left( {\left| {\tilde x_{\mathcal{T}_k } } \right|} \right)  \\
& \left| {\tilde x_{\mathcal{T}_l } } \right| \le \rho _{\tilde x_{\mathcal{T}_l } }^{\tilde x_{\mathcal{T}_k } } \left( {\left| {\tilde x_{\mathcal{T}_k } } \right|} \right).
\end{align*}

Taking the derivative of $V_{\mathcal{T}_k }$,  we have
\begin{align*}
\nabla V_{\mathcal{T}_k } \dot {\tilde x}_{\mathcal{T}_k }
& = \tilde x_{\mathcal{T}_k }^T \dot {\tilde x}_{\mathcal{T}_k } \\
& = \tilde x_{\mathcal{T}_k}^T\left[ u_{\mathcal{T}_k }+ {f\left( {t,x_{\otimes\left( e_k \right)}} \right) - f \left( {t,x_{\odot\left( e_k \right)} } \right)}  - \left( {\tilde x_{\mathcal{T}_k }  + \sum\limits_{ e_l  \in N_{e_k }^ \otimes} {\left[ { A_{e_1} } \right]_{kl} } \tilde x_{\mathcal{T}_l } } \right) -  Le_{2_k } \tilde x_\mathcal{C} +\left[  E_\mathcal{T}^T\right]_k w  \right] \\
& \le \tilde x_{\mathcal{T}_k }^T \left( {u_{\mathcal{T}_k }  - \tilde x_{\mathcal{T}_k } } \right) + \eta \left| {\tilde x_{\mathcal{T}_k }^T } \right|\left| {\tilde x_{\mathcal{T}_k } } \right| - \tilde x_{\mathcal{T}_k }^T \sum\limits_{ e_l  \in N_{e_k }^ \otimes } {\left[ { A_{e_1} } \right]_{kl} } \tilde x_{\mathcal{T}_l } - \tilde x_{\mathcal{T}_k }^T  Le_{2_k } \tilde x_\mathcal{C} + \tilde x_{\mathcal{T}_k }^T\left[ E_\mathcal{T}^T\right]_k w \\
& \le \tilde x_{\mathcal{T}_k }^T \left( {u_{\mathcal{T}_k }  - \tilde x_{\mathcal{T}_k } } \right) + \eta \left| {\tilde x_{\mathcal{T}_k }^T } \right|\left| {\tilde x_{\mathcal{T}_k } } \right| + \left| {\tilde x_{\mathcal{T}_k }^T } \right|\sum\limits_{e_l  \in N_{e_k }^ \otimes }{\left| {\left[ { A_{e_1} } \right]_{kl} } \right|} \rho _{\tilde x_{\mathcal{T}_l } }^{\tilde x_{\mathcal{T}_k } } \left( {\left| {\tilde x_{\mathcal{T}_k } } \right|} \right) +\left| {\tilde x_{\mathcal{T}_k }^T } \right| \left|  L_{{e_2}_k}  \right| \rho _{\tilde  x_\mathcal{C} }^{\tilde  x_{\mathcal{T}_k }} \left( {\left| {\tilde  x_{\mathcal{T}_k } } \right|} \right) \\
& \quad + \sqrt{2}\xi\left| {\tilde x_{\mathcal{T}_k }^T } \right| .
\end{align*}
By using (\ref{virtual:controllawdesignforspanningtree}), we obtain
\begin{align*}
\nabla V_{\mathcal{T}_k } \dot {\tilde x}_{\mathcal{T}_k }  \le  - {{\sigma _{\mathcal{T}_k } } \over 2} \tilde x_{\mathcal{T}_k }^T \tilde x_{\mathcal{T}_k }  =  - \sigma _{\mathcal{T}_k } V_{\mathcal{T}_k }
\end{align*}
which implies that the ${\tilde x}_{\mathcal{T}_k}$-subsystem is ISS.

\emph{Step 2}:
We define $\tilde x_{s_i} = \{\tilde x_{\mathcal{T}_k}:\text{$e_k  \in \tilde{\mathcal{G}}_\mathcal{T} \text{~and~} e_k \in
\mathcal{E}^i_\mathcal{C}$} \}$ as the state of the strongly connected component. By taking  \eqref{virtual:controllawdesignforspanningtree} as the input, each subsystem $\tilde x_{\mathcal{T}_k}$ is ISS and admits an ISS-Lyapunov function $V_{\mathcal{T}_k}$. From Lemma \ref{lem:CSG}, for the set of all the simple loops $\Theta_\mathcal{T}$, if the cyclic-small-gain condition
\begin{align*}
A_{\mathcal{T}_o}(\gamma _{\tilde x_{\mathcal{T}_k } }^{\tilde x_{\mathcal{T}_l } })  < \mathrm{Id}
\end{align*}
is satisfied, then the composed subsystem $\tilde x_{s_i}$ is ISS. By taking the strongly connected subsystems $\tilde x_{s_i}$ as nodes, the upside subsystem $H_\mathcal{T}$ is acyclic as we previously mentioned in Lemma \ref{Lemma:treetoGe}. From \cite{tanner2004leader}, we note that the ISS properties are retained if the underlying digraph is acyclic. Therefore, the upside subsystem $H_\mathcal{T}$ is ISS as well.

Additionally, we can verify that $V_\mathcal{T}$ defined in \eqref{lyap:Vt} is an ISS-Lyapunov function. Also we can calculate the the interconnection gain $\gamma _{\tilde x_\mathcal{T} }^{\tilde x_\mathcal{C} }$ from $H_\mathcal{C}$ to $H_\mathcal{T}$. To begin with, according to
\begin{align*}
V_{\mathcal{T}_k } \ge \mathop {\max }\limits_{ e_l  \in N_{e_k }^ \otimes } \left\{ {\gamma _{\tilde x_{\mathcal{T}_k } }^{\tilde x_{\mathcal{T}_l } } \left( {V_{\mathcal{T}_l } } \right) }, {\gamma _{\tilde x_{\mathcal{T}_k } }^{\tilde x_{\mathcal{C}} } \left( {V_{\mathcal{C} } } \right) } \right\}
\end{align*}
we have
\begin{align*}
V_\mathcal{T} \ge (N-1) {\gamma _{\tilde x_{\mathcal{T}_k } }^{\tilde x_{\mathcal{C}} } \left( {V_{\mathcal{C} } } \right) }, k = 1,2, \cdots, N - 1
\end{align*}
since $V_\mathcal{T}  = \sum\limits_{k = 1}^{N -1} {V_{\mathcal{T}_k } }$.

By choosing $\sigma_k = \sigma > 0$, we obtain
\begin{align*}
\nabla V_{\mathcal{T}} \dot {\tilde x}_{\mathcal{T}} & =  \sum\limits_{k = 1}^{N -1} \nabla V_{\mathcal{T}_k} \dot {\tilde x}_{\mathcal{T}_k}  \le  - \sum\limits_{k = 1}^{N -1} {{\sigma_k } \over 2}  \tilde x_{\mathcal{T}_k }^T \tilde x_{\mathcal{T}_k }  =  - \sigma  V_{\mathcal{T}}
\end{align*}
which implies $V_{\mathcal{T}_k }$ is an ISS-Lyapunov function. Then we can simply choose $\gamma _{\tilde x_\mathcal{T} }^{\tilde x_\mathcal{C} }$ as
\begin{align}\label{gain1}
\gamma _{\tilde x_\mathcal{T} }^{\tilde x_\mathcal{C} } = (N-1) {\gamma _{\tilde x_{\mathcal{T}_k } }^{\tilde x_\mathcal{C} }}.
\end{align}

\emph{Step 3}:
Since we have $\tilde x_\mathcal{C}  = T^T \tilde x_\mathcal{T}$ from (\ref{equation:xc-xt}), then
\begin{align*}
V_\mathcal{C}  & = {1 \over 2}\tilde x_\mathcal{C}^T \tilde x_\mathcal{C} = {1 \over 2}\tilde x_\mathcal{T}^T TT^T \tilde x_\mathcal{T}
\end{align*}
where $TT^T$ is symmetric positive semidefinite. Suppose the eigenvalues of $TT^T$ can be ordered and denoted as $0 \le \bar \lambda_1 \le \bar \lambda_2\le\cdots\le \bar \lambda_{N-1}$. Clearly, one can obtain that $V_{\mathcal{C}}  \ge \bar \lambda_1 V_{\mathcal{T}}$.

Assume that $P$ is an orthogonal transformation matrix and let $\tilde x_\mathcal{T} = Py$, then we can translate $V_\mathcal{C}$ into a standard quadratic form as follows:
\begin{align*}
V_{\mathcal{C}} ={1 \over 2}( \bar \lambda_1 y_1^2 + \bar \lambda_2 y_2^2 + \cdots + \bar \lambda_{N-1} y_{N-1}^2).
\end{align*}
By taking the derivation of $V_\mathcal{C}$, one can obtain
\begin{align*}
\nabla V_{\mathcal{C}} \dot {\tilde x}_{\mathcal{C}} & =  \sum\limits_{k = 1}^{N -1} \bar \lambda_k y_k^T \dot y_k
=  \sum\limits_{k = 1}^{N -1}\bar \lambda_k {\tilde x_{\mathcal{T}}}^T P_k P_k^{-1} {\dot {\tilde x}_{\mathcal{T}}} \nonumber \\
& =  \sum\limits_{k = 1}^{N -1}\bar \lambda_k \nabla V_{\mathcal{T}} \dot {\tilde x}_{\mathcal{T}}  \le  - \sigma  \sum\limits_{k = 1}^{N -1}\bar \lambda_k V_{\mathcal{T}} \le  - {\sigma \over \bar \lambda_{N-1}} \sum\limits_{k = 1}^{N -1} \bar \lambda_k V_{\mathcal{C}}
\end{align*}
which implies that $V_\mathcal{C}$ is an ISS-Lyapunov function. Then we can choose the interconnection gain as
\begin{align}\label{gain2}
\gamma _{\tilde x_\mathcal{C} }^{\tilde x_\mathcal{T} } = \bar \lambda_1.
\end{align}
For this two interacting subsystems $H_\mathcal{T}$ and $H_\mathcal{C}$, if the small gain condition $\gamma _{\tilde x_\mathcal{T} }^{\tilde x_\mathcal{C} }  \circ \gamma _{\tilde x_\mathcal{C} }^{\tilde x_\mathcal{T} }  < \mathrm{Id}$ is hold, then the whole system is ISS. To satisfy the small gain condition, by combining \eqref{gain1} and \eqref{gain2}, we can choose
\begin{align}\label{cond1}
\gamma _{\tilde x_{\mathcal{T}_k } }^{\tilde x_\mathcal{C} }  < {1 \over {{\bar \lambda_1} \left( {N - 1} \right)}}.
\end{align}
From Theorem \ref{theorem:sc}, it is clear that the objective robust consensus can be guaranteed while \eqref{cond1} is hold.



\end{proof}
\begin{remark}
The ISS-Lyapunov function for the composite system $\tilde x\left( t \right)$ can be obtained by using the approach mentioned in \cite{liu2011lyapunov}. Besides, the discussion about the explicit cyclic-small-gain conditions required in (\ref{align:cyclic-small-gain}) can be found in our previous study \cite{wang2012second}. In particular, we can check that the cyclic-small-gain conditions can be guaranteed by simply choosing the nonlinear gains $\gamma _{\tilde x_{e_k } }^{\tilde x_{e_l } } < \mathrm{Id}$, with ${e_l  \in N_{e_k }^ \otimes }$.
\end{remark}


\section{SIMULATIONS}
Numerical simulations are performed to illustrate the obtained theoretical results. For this set of simulations, we consider a six-agent system with both strongly connected graph and quasi-strongly connected graph. The dynamics of the $i$-th agent is assumed to be
\begin{equation*}
\dot x_i\left(t\right)   = f \left( {t,x_i } \right)  + w_i\left(t\right) + u_i\left(t\right),~~~i = 1,2, \cdots,6
\end{equation*}
where $x_i\left(t\right), u_i\left(t\right) \in \rtn^3$ , with the inherent nonlinear dynamics $f :\rtn \times \rtn^3 \to \rtn^3$ described by Chua's circuit
\begin{align}\label{align:chaos}
f \left(t, {x_i } \right) =
(\zeta \left( { - x_{i1}  + x_{i2}  - l\left( {x_{i1} } \right)} \right),x_{i1}  - x_{i2}  + x_{i3} , - \chi x_{i2} )^T
\end{align}
where $l\left( {x_{i1} } \right) = bx_{i1}  + 0.5\left( {a - b} \right)\left( {\left| {x_{i1}  + 1} \right| - \left| {x_{i1}  - 1} \right|} \right)$. While choosing $\zeta  = 10$, $\chi  = 18$, $a =  - 4/3$ and $b =  - 3/4$, system (\ref{align:chaos}) is chaotic with the Lipschitz constant $\eta  = 4.3871$ (\cite{yu2010second}) as shown in Figure \ref{figure:chaos}. Assume the state of each agent is corrupted by white noise $w_i\left(t\right) \in \rtn^3 $ with the noise power $\xi  = \left[ {\begin{matrix} {0.25} & {0.25} & {0.25} \end{matrix} } \right]^T$.

\begin{figure}[hbtp]
\centering
{\includegraphics[height=4.5cm]{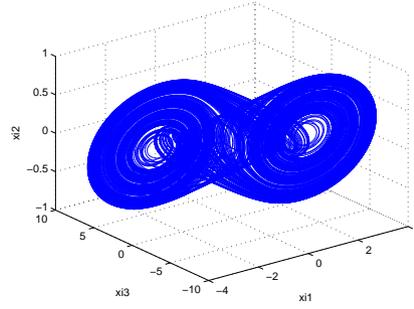}}
\caption{Chaos with two attractors.}
\label{figure:chaos}
\end{figure}

\subsection{Case 1: Strongly Connected}

The digraph is strongly connected as shown in Figure \ref{fig:simulation-stronglyconnected}. From (\ref{edge:finaldynamics}),  multi-agent system can be translated into the edge Laplacian dynamics associated with the edge-interconnection digraph shown in Figure \ref{fig:edge-simulation-stronglyconnected}. The \ul{incidence} matrix $E$ and  the edge \ul{adjacency} matrix $A_e$ are
$$
E = \left( {\begin{matrix}
   { - 1} & 1 & 0 & 0 & 0 & 0 & 0 & 0  \cr
   1 & 0 & { - 1} & 1 & { - 1} & 0 & 0 & 0  \cr
   0 & { - 1} & 0 & 0 & 0 & 1 & { - 1} & 1  \cr
   0 & 0 & 1 & { - 1} & 0 & 0 & 0 & 0  \cr
   0 & 0 & 0 & 0 & 1 & { - 1} & 0 & 0  \cr
   0 & 0 & 0 & 0 & 0 & 0 & 1 & { - 1}  \cr
 \end{matrix} } \right)
$$
$$
Ae = \left( {\begin{matrix}
   0 & { - 1} & 0 & 1 & 0 & 0 & 0 & 0  \cr
   0 & 0 & 0 & 0 & 0 & { - 1} & 0 & { - 1}  \cr
   { - 1} & 0 & 0 & { - 1} & 0 & 0 & 0 & 0  \cr
   1 & 0 & { - 1} & 0 & 0 & 0 & 0 & 0  \cr
   { - 1} & 0 & 0 & { - 1} & 0 & 0 & 0 & 0  \cr
   0 & 0 & 0 & 0 & { - 1} & 0 & 0 & 1  \cr
   0 & 0 & 0 & 0 & 0 & { - 1} & 0 & { - 1}  \cr
   0 & 0 & 0 & 0 & 0 & 1 & { - 1} & 0  \cr
 \end{matrix} } \right).
$$
Then the edge Laplacian can be calculated though $Le = I + Ae$.

By simply choosing $\gamma_{\tilde x_{e_k} }^{\tilde x_{e_l} }  < \mathrm{Id}~\text{with}~ k,l=1,2,\cdots,8, e_l  \in N_{e_k }^ \otimes$, the cyclic-small-gain theorem condition is satisfied. By taking $\gamma_{\tilde x_{e_k} }^{\tilde x_{e_l} }\left(s\right) = 0.9487s$ and $\underline{\alpha} = \overline{\alpha} = 1$, then we obtain
$$
\rho _{\tilde x_{e_l} }^{\tilde x_{e_k} }  = \underline{\alpha} ^{ - 1}  \circ \left( {\gamma _{\tilde x_{e_k} }^{\tilde x_{e_l} } } \right)^{ - 1}  \circ \overline{\alpha} \left( s \right)= 1.0541s.
$$

After choosing $\sigma _{e_k }  = 1$, the input for the edge-interconnection system \eqref{edge:subsystem} is proposed as
\begin{align*}
& u_{e_1 }  =  - {{\tilde x_{e_1} } \over {\left| {\tilde x_{e_1 } } \right|}}\left( {\eta\left| {\tilde x_{e_1} } \right| + 2.1082\left| {\tilde x_{e_1} } \right|} + \sqrt{2} \xi  \right) + 0.5\tilde x_{e_1}, u_{e_2 }  =  - {{\tilde x_{e_2 } } \over {\left| {\tilde x_{e_2 } } \right|}}\left( {\eta\left| {\tilde x_{e_2} } \right| + 2.1082\left| {\tilde x_{e_2} } \right|} + \sqrt{2} \xi \right) + 0.5\tilde x_{e_2} \\
&u_{e_3 }  =  - {{\tilde x_{e_3 } } \over {\left| {\tilde x_{e_3 } } \right|}}\left( {\eta\left| {\tilde x_{e_3} } \right| + 2.1082\left| {\tilde x_{e_3} } \right|} + \sqrt{2} \xi \right) + 0.5\tilde x_{e_3},  u_{e_4 }  =  - {{\tilde x_{e_4 } } \over {\left| {\tilde x_{e_4 } } \right|}}\left( {\eta\left| {\tilde x_{e_4} } \right| + 2.1082\left| {\tilde x_{e_4} } \right|} + \sqrt{2} \xi \right) + 0.5\tilde x_{e_4} \\
&u_{e_5 }  =  - {{\tilde x_{e_5 } } \over {\left| {\tilde x_{e_5 } } \right|}}\left( {\eta\left| {\tilde x_{e_5} } \right| + 2.1082\left| {\tilde x_{e_5} } \right|} + \sqrt{2} \xi \right) + 0.5\tilde x_{e_5},  u_{e_6 }  =  - {{\tilde x_{e_6 } } \over {\left| {\tilde x_{e_6 } } \right|}}\left( {\eta\left| {\tilde x_{e_6} } \right| + 2.1082\left| {\tilde x_{e_6} } \right|} + \sqrt{2} \xi \right) + 0.5\tilde x_{e_6}  \\
&u_{e_7 }  =  - {{\tilde x_{e_7 } } \over {\left| {\tilde x_{e_7 } } \right|}}\left( {\eta\left| {\tilde x_{e_7} } \right| + 2.1082\left| {\tilde x_{e_7} } \right|} + \sqrt{2} \xi \right) + 0.5\tilde x_{e_7},  u_{e_8 }  =  - {{\tilde x_{e_8 } } \over {\left| {\tilde x_{e_8 } } \right|}}\left( {\eta\left| {\tilde x_{e_8} } \right| + 2.1082\left| {\tilde x_{e_8} } \right|} +\sqrt{2} \xi \right) + 0.5\tilde x_{e_8}.
\end{align*}
Finally, by using (\ref{auxiliary:controllaw}), the consensus protocol (\ref{protocolA1}) can be obtained.

The simulation results are shown in Figure \ref{fig:stronglyconnected}. The edge states $\tilde x_{e_k}$ converge to the neighbors of the origin by applying the consensus protocol shown in Figure \ref{fig:edgevalues_stronglyconnected}. From Figure \ref{fig:stronglyconnected_x1}, \ref{fig:stronglyconnected_x2} and \ref{fig:stronglyconnected_x3}, one can see that the robust consensus is indeed achieved. Therefore, the proposed consensus protocol can effectively address the challenges resulting from the inherently nonlinear dynamics and unknown but bounded disturbances.
\begin{figure}[hbtp]
\begin{center}
\mbox{
\subfigure[The edge agreement is achieved under control law \eqref{virtual:controllawdesign} and \eqref{auxiliary:controllaw}.]
{\includegraphics[height=50mm]{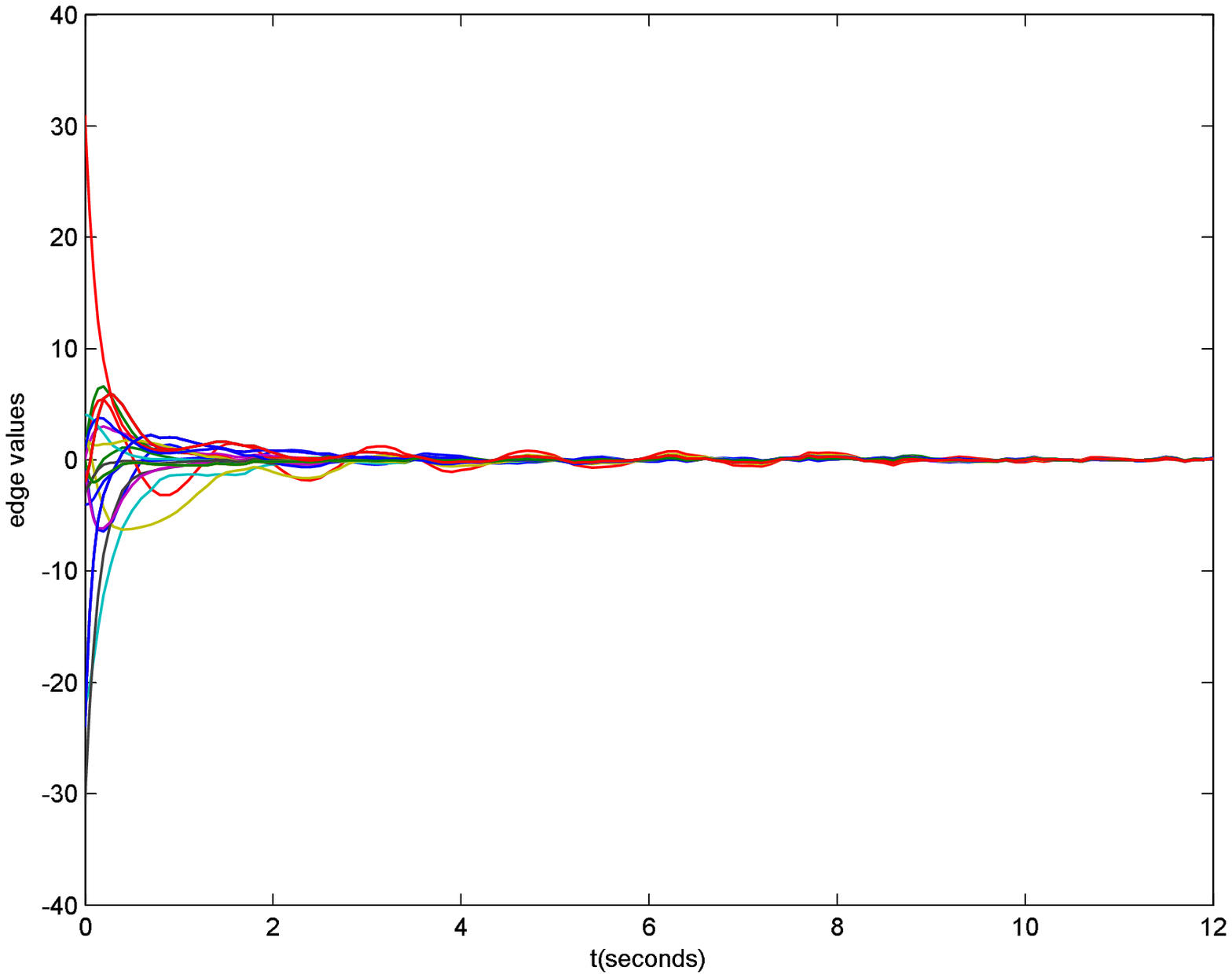}\label{fig:edgevalues_stronglyconnected}}\qquad
\subfigure[The evolutions of $x_{i1}$.]
{\includegraphics[height=50mm]{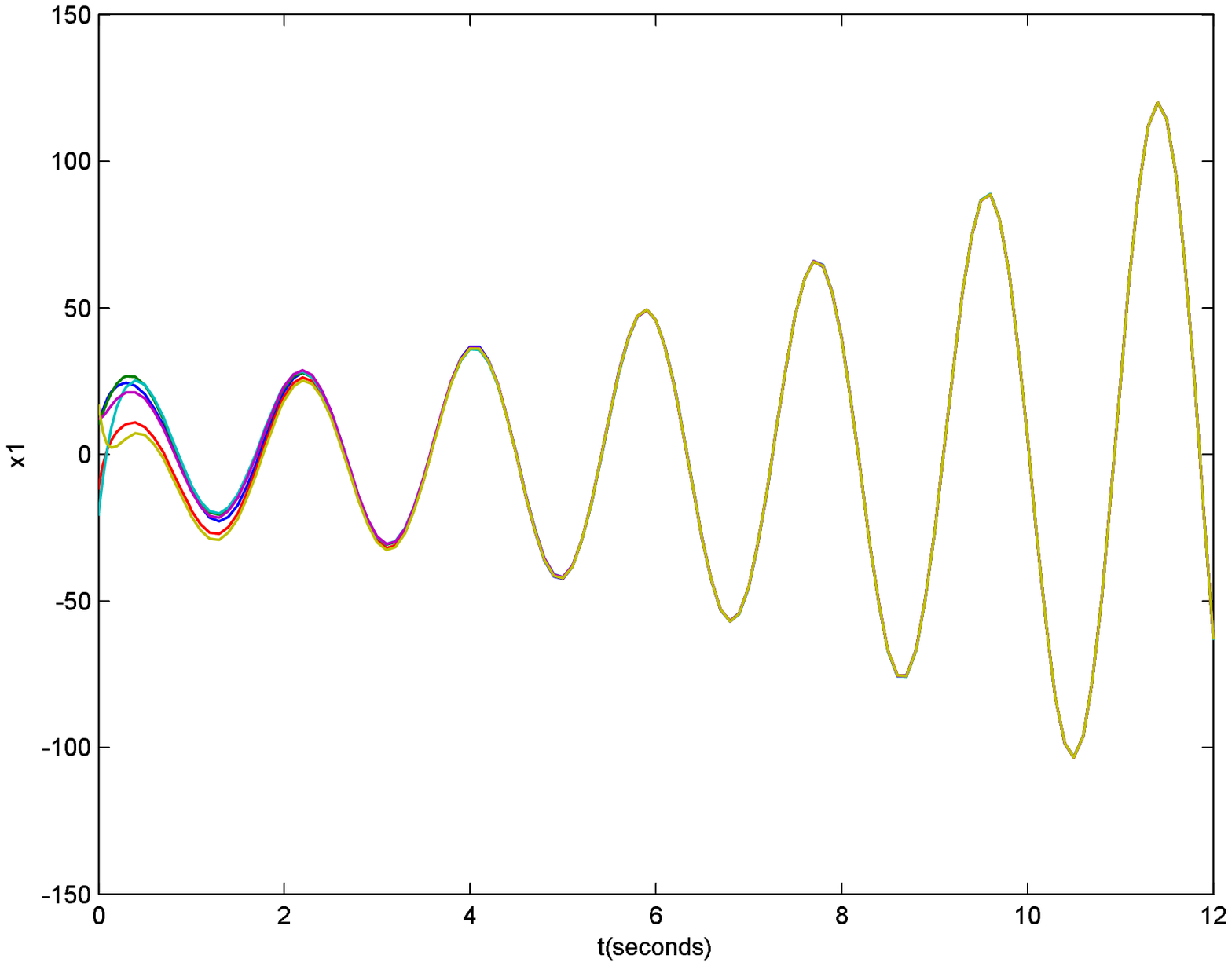}\label{fig:stronglyconnected_x1}}
}
\mbox{\subfigure[The evolutions of $x_{i2}$.]
{\includegraphics[height=50mm]{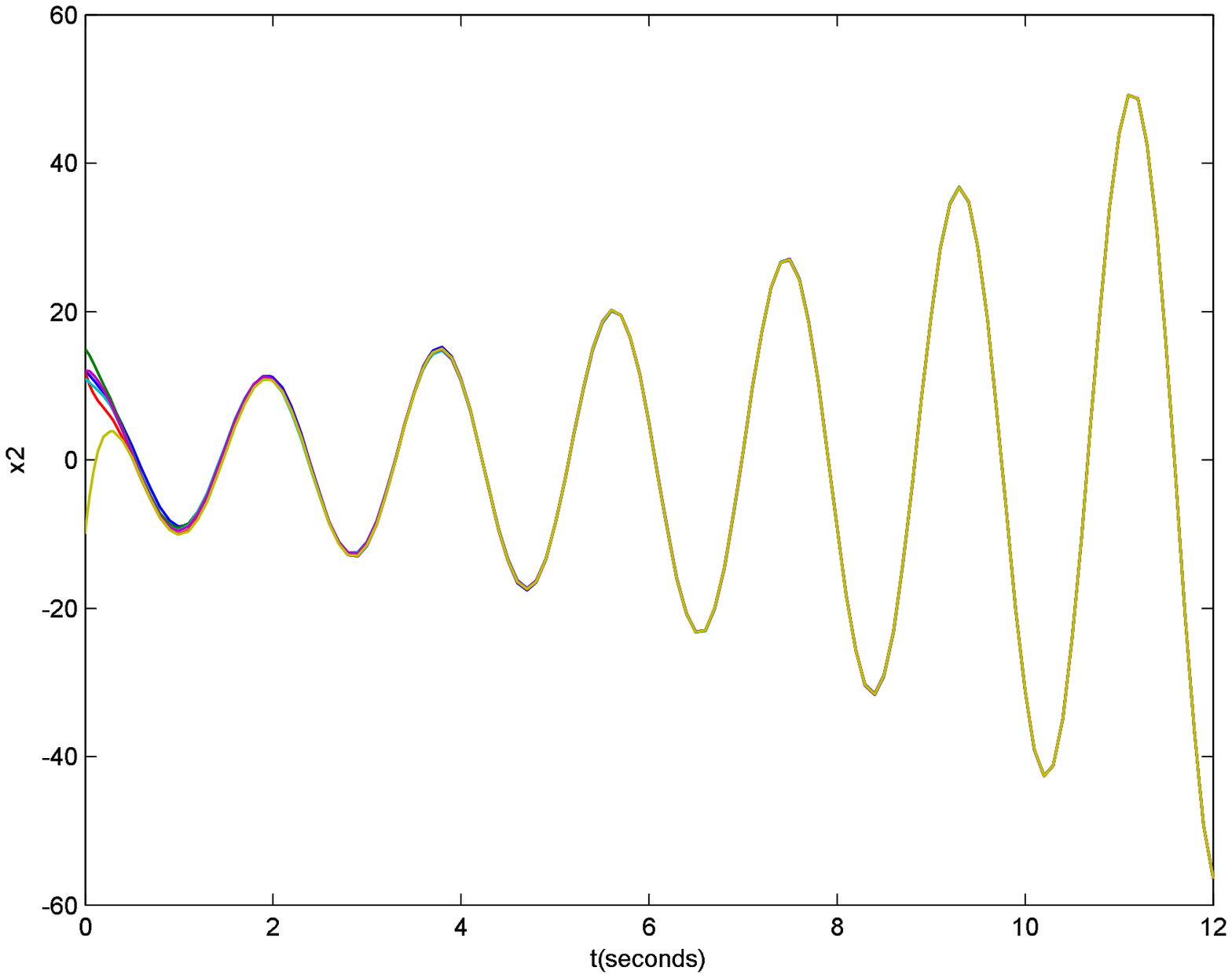}\label{fig:stronglyconnected_x2}}\qquad
\subfigure[The evolutions of $x_{i3}$.]
{\includegraphics[height=50mm]{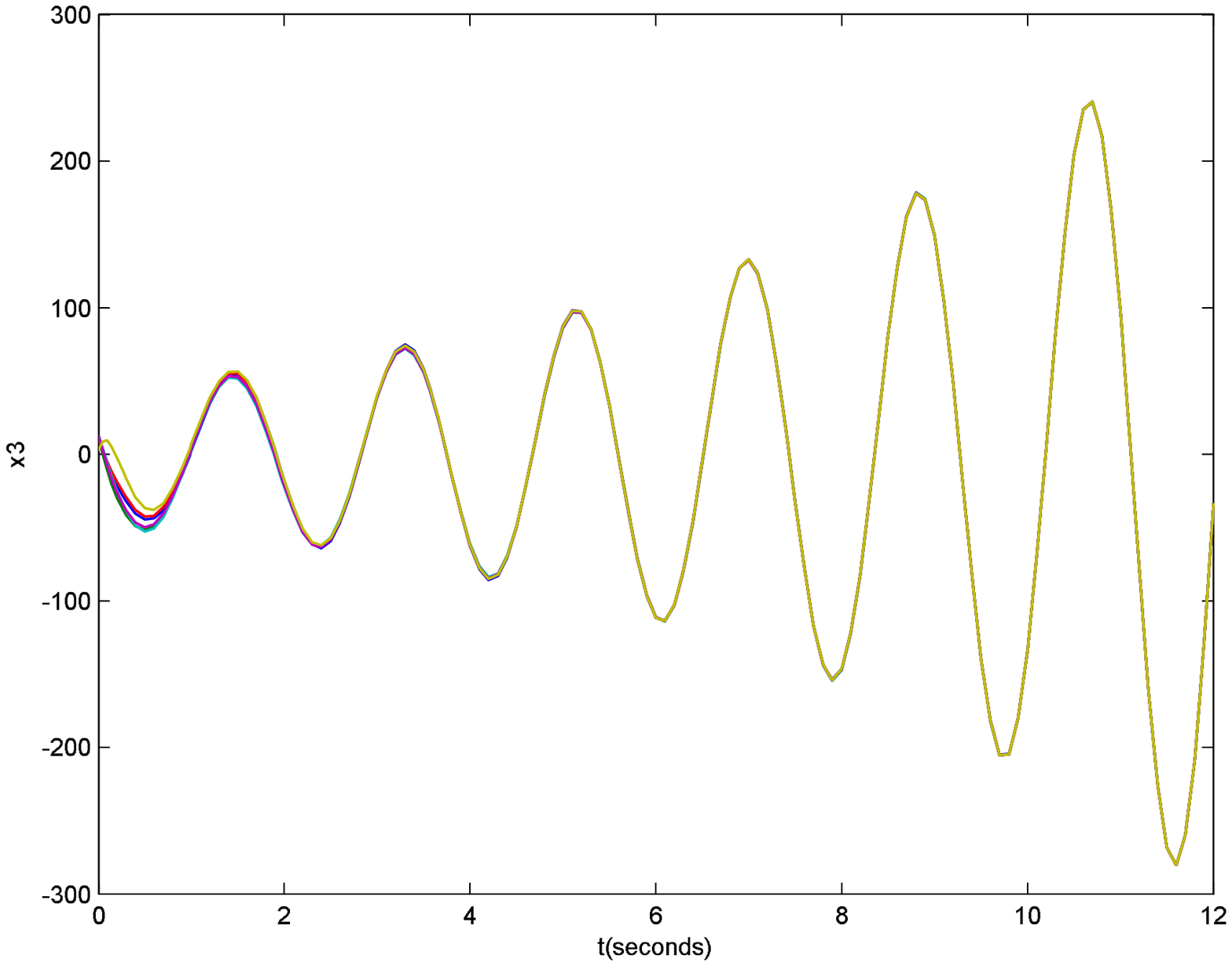}\label{fig:stronglyconnected_x3}}
}
\caption{The simulation results under strongly connected digraph.}
\label{fig:stronglyconnected}
\end{center}
\end{figure}

\subsection{Case 2: Quasi-strongly Connected}
In this case, Figure \ref{fig:simulation-spanningtree} depicts a quasi-strongly connected digraph, while Figure \ref{fig:edge-simulation-spanningtree} illustrates its corresponding edge-interconnection digraph. The yellow nodes in Figure \ref{fig:edge-simulation-spanningtree} correspond to $\tilde {\mathcal{G}}_\mathcal{T}$. According to the partition, the \ul{incidence} matrix of the spanning tree $ E_\mathcal{T}$, the \ul{incidence} matrix of the cospanning tree $ E_\mathcal{C}$  and the edge-adjacency matrix $ A_e$ are as follows:

$$
 E_\mathcal{T}  = \bordermatrix{
  {}&{e_1 } & {e_2 } & {e_3 } & {e_4 } & {e_7 }  \cr
  {}&1 & 1 & 1 & 0 & 0  \cr
  {}&{ - 1} & 0 & 0 & 1 & 0  \cr
  {}&0 & 0 & { - 1} & 0 & 1  \cr
  {}&0 & 0 & 0 & { - 1} & 0  \cr
  {}&0 & { - 1} & 0 & 0 & 0  \cr
  {}&0 & 0 & 0 & 0 & { - 1}  \cr
 }~~~~
 E_\mathcal{C}  = \bordermatrix{
   {}&{e_5 } & {e_6 }  \cr
   {}&0 & 0  \cr
   {}&{ - 1} & 0  \cr
   {}&0 & 1  \cr
   {}&0 & 0  \cr
   {}&1 & { - 1}  \cr
   {}&0 & 0  \cr
}
$$
$$
 A_e = \bordermatrix{
   {} & {e_1 } & {e_2 } & {e_3 } & {e_4 } & {e_7 } & {e_5 } & {e_6 }  \cr
   {e_1 } & 0 & 1 & 1 & { - 1} & 0 & 0 & 0  \cr
   {e_2 } & 1 & 0 & 1 & 0 & 0 & { - 1} & 0  \cr
   {e_3 } & 1 & 1 & 0 & 0 & { - 1} & 0 & { - 1}  \cr
   {e_4 } & 0 & 0 & 0 & 0 & 0 & 0 & 0  \cr
   {e_7 } & 0 & 0 & 0 & 0 & 0 & 0 & 1  \cr
   {e_5 } & 0 & 0 & 0 & { - 1} & 0 & 0 & 0  \cr
   {e_6 } & 0 & 0 & 0 & 0 & 1 & { - 1} & 0  \cr
}.
$$
By simply choosing $\gamma_{\tilde x_{\mathcal{T}_k} }^{\tilde x_{\mathcal{T}_l} }\left(s\right) = 0.9487s, k,l=1,2,3,4,7, e_l  \in N_{e_k }^ \otimes$, the cyclic-small-gain condition is satisfied. From (\ref{euqation:T}), we could have
$$
T = \left[ {\begin{matrix}
   1 & 0  \cr
   { - 1} & 1  \cr
   0 & { - 1}  \cr
   0 & 0  \cr
   0 & 0  \cr
 \end{matrix}} \right].
$$
The smallest non-zero eigenvalue of $TT^T$ is $\bar \lambda_1 = 1$; therefore, from (\ref{align:smallgain}), we can choose $\gamma _{\tilde x_{{\mathcal{T}_k}} }^{\tilde x_\mathcal{C} } = 0.175s$ and resulting
$$
\rho _{\tilde x_{\mathcal{T}_l} }^{\tilde x_{\mathcal{T}_k} }  = \underline{\alpha} ^{ - 1}  \circ \left( {\gamma _{\tilde x_{\mathcal{T}_k} }^{\tilde x_{\mathcal{T}_l} } } \right)^{ - 1}  \circ \overline{\alpha} \left( s \right)= 1.0541s
$$
$$
\rho _{\tilde x_\mathcal{C} }^{\tilde x_{{\mathcal{T}_k} } }  = \underline{\alpha} ^{ - 1}  \circ \left( {\gamma _{ \tilde x_{{\mathcal{T}_k} } }^{\tilde x_{\mathcal{C}} } } \right)^{ - 1}  \circ \overline{\alpha} \left( s \right) = 5.7143s.
$$
After taking $\sigma _{e_k }  = 1$, the control input for each edge-interconnection system \eqref{edge:subsystem} can be determined as
\begin{align*}
& u_{e_1 }  =  - {{\tilde x_{\mathcal{T}_1 } } \over {\left| {\tilde x_{\mathcal{T}_1 } } \right|}}\left( {\eta\left| {\tilde x_{\mathcal{T}_1} } \right| + 3.1623\left| {\tilde x_{\mathcal{T}_1} } \right|} + \sqrt{2} \xi \right) + 0.5\tilde x_{\mathcal{T}_1}\\
&u_{e_2 }  =  - {{\tilde x_{\mathcal{T}_2 } } \over {\left| {\tilde x_{\mathcal{T}_2 } } \right|}}\left( {\eta\left| {\tilde x_{\mathcal{T}_2} } \right| + 7.8225\left| {\tilde x_{\mathcal{T}_2} } \right|} + \sqrt{2} \xi \right) + 0.5\tilde x_{\mathcal{T}_2} \\
&u_{e_3 }  =  - {{\tilde x_{\mathcal{T}_3 } } \over {\left| {\tilde x_{\mathcal{T}_3 } } \right|}}\left( {\eta\left| {\tilde x_{\mathcal{T}_3} } \right| + 8.8766\left| {\tilde x_{\mathcal{T}_3} } \right|} + \sqrt{2} \xi \right) + 0.5\tilde x_{\mathcal{T}_3}\\
&u_{e_4 }  =  - {{\tilde x_{\mathcal{T}_4 } } \over {\left| {\tilde x_{\mathcal{T}_4 } } \right|}}\left( \eta\left| {\tilde x_{\mathcal{T}_4} } \right| + \sqrt{2} \xi \right) + 0.5\tilde x_{\mathcal{T}_4}  \\
&u_{e_7 }  =  - {{\tilde x_{\mathcal{T}_7 } } \over {\left| {\tilde x_{\mathcal{T}_7 } } \right|}}\left( {\eta\left| {\tilde x_{\mathcal{T}_7} } \right| + 5.7143\left| {\tilde x_{\mathcal{T}_7} } \right|}+\sqrt{2} \xi \right) + 0.5\tilde x_{\mathcal{T}_7}.
\end{align*}
Finally, based on (\ref{protocolA1}) and (\ref{auxiliary:controllaw1}), the implementable consensus protocol can be obtained.

Figure \ref{fig:edge_value_spanningtree} shows that the edge states reach agreement by using the consensus protocol. Simultaneously, multi-agent system achieves robust consensus shown in Figure \ref{fig:quasistrongly_x1}, \ref{fig:quasistrongly_x2} and \ref{fig:quasistrongly_x3}. Clearly, the proposed consensus protocol can effectively restrain the influences resulting from the inherently nonlinear dynamics and the unknown but bounded disturbances.
\begin{figure}[hbtp]
\begin{center}
\mbox{
\subfigure[The edge agreement is achieved under control law \eqref{virtual:controllawdesignforspanningtree} and \eqref{auxiliary:controllaw1}.]
{\includegraphics[height=50mm]{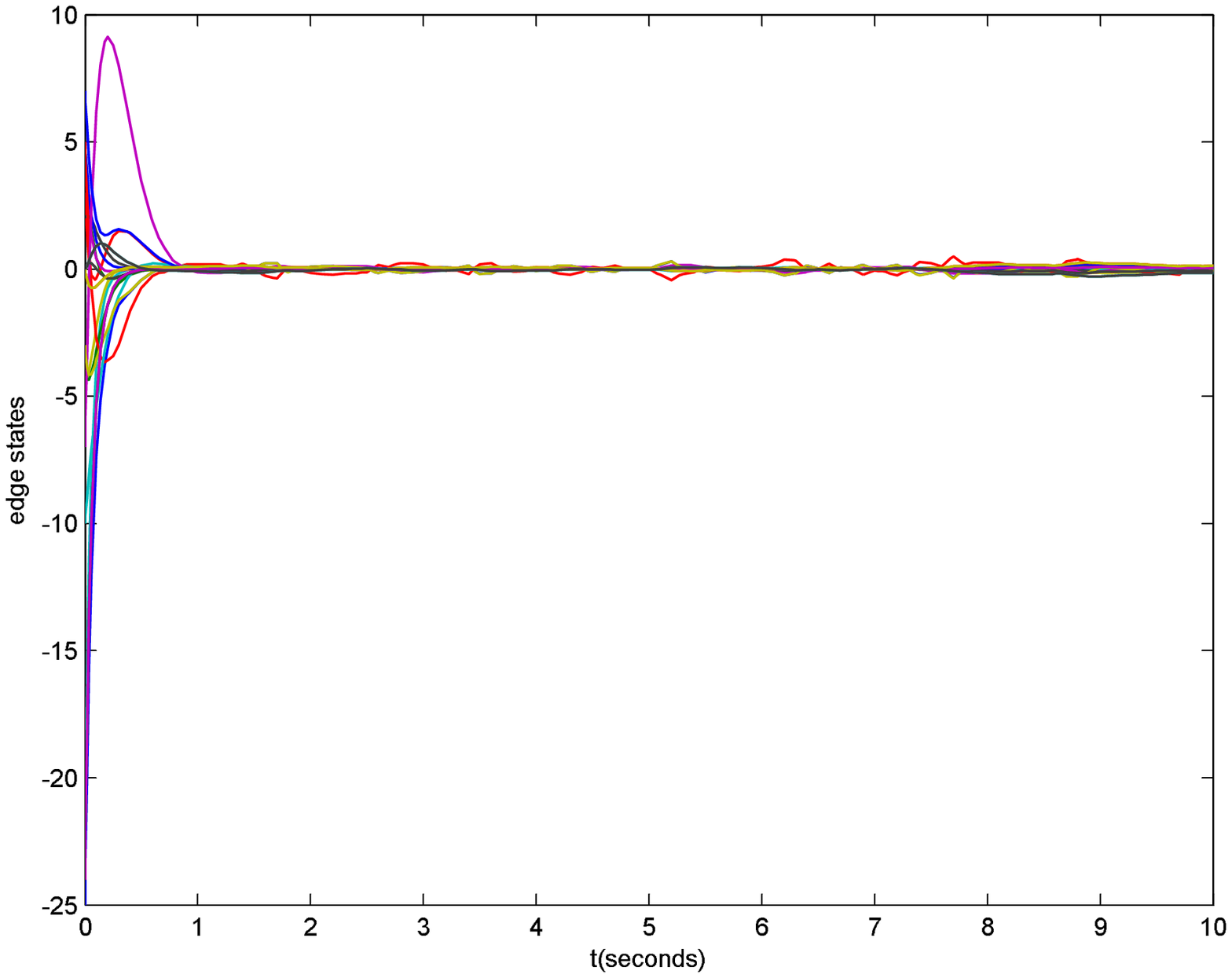}\label{fig:edge_value_spanningtree}} \qquad
\subfigure[The evolutions of $x_{i1}$.]
{\includegraphics[height=50mm]{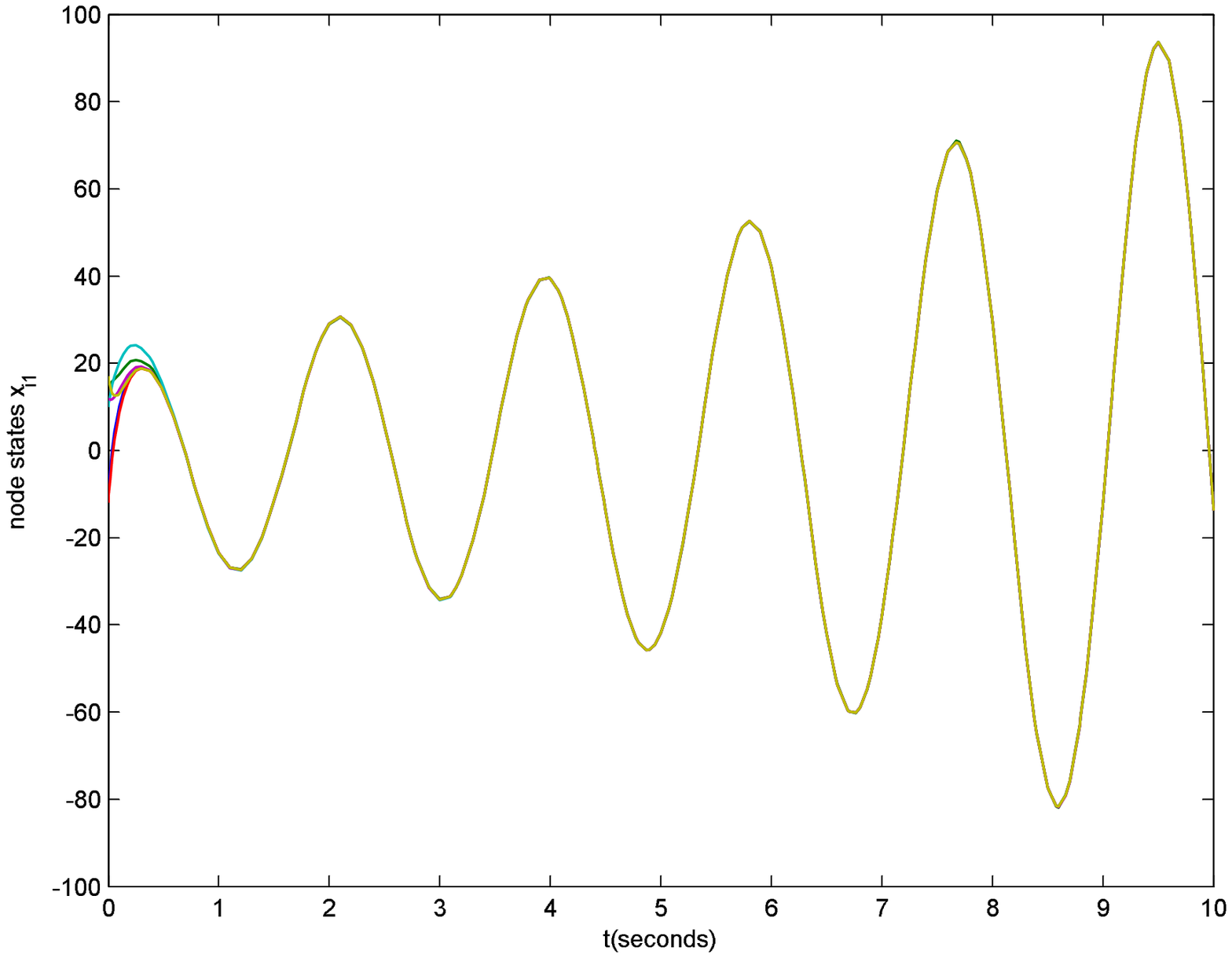}\label{fig:quasistrongly_x1}}
}
\mbox{\subfigure[The evolutions of $x_{i2}$.]
{\includegraphics[height=50mm]{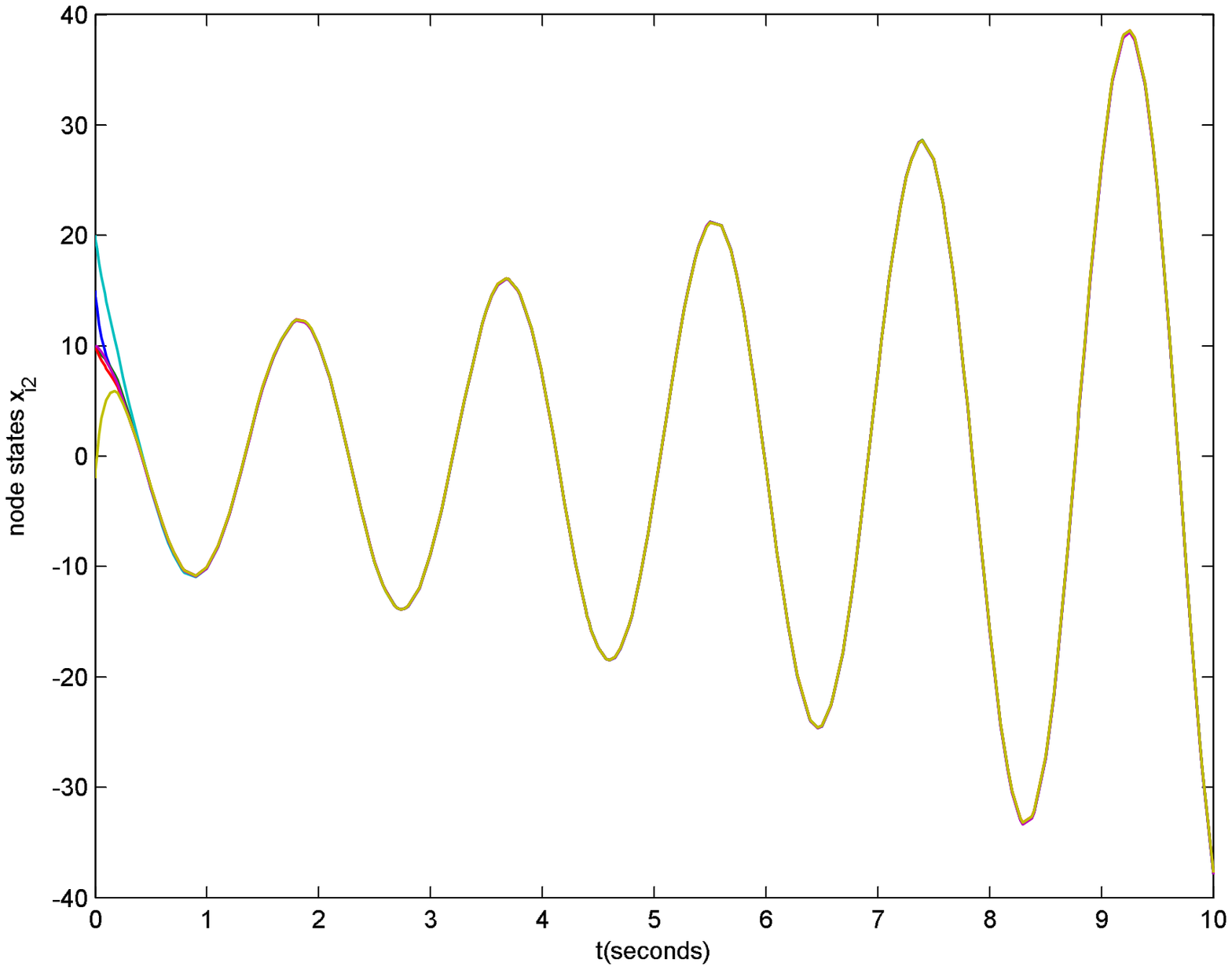} \label{fig:quasistrongly_x2}}\qquad
\subfigure[The evolutions of $x_{i3}$.]
{\includegraphics[height=50mm]{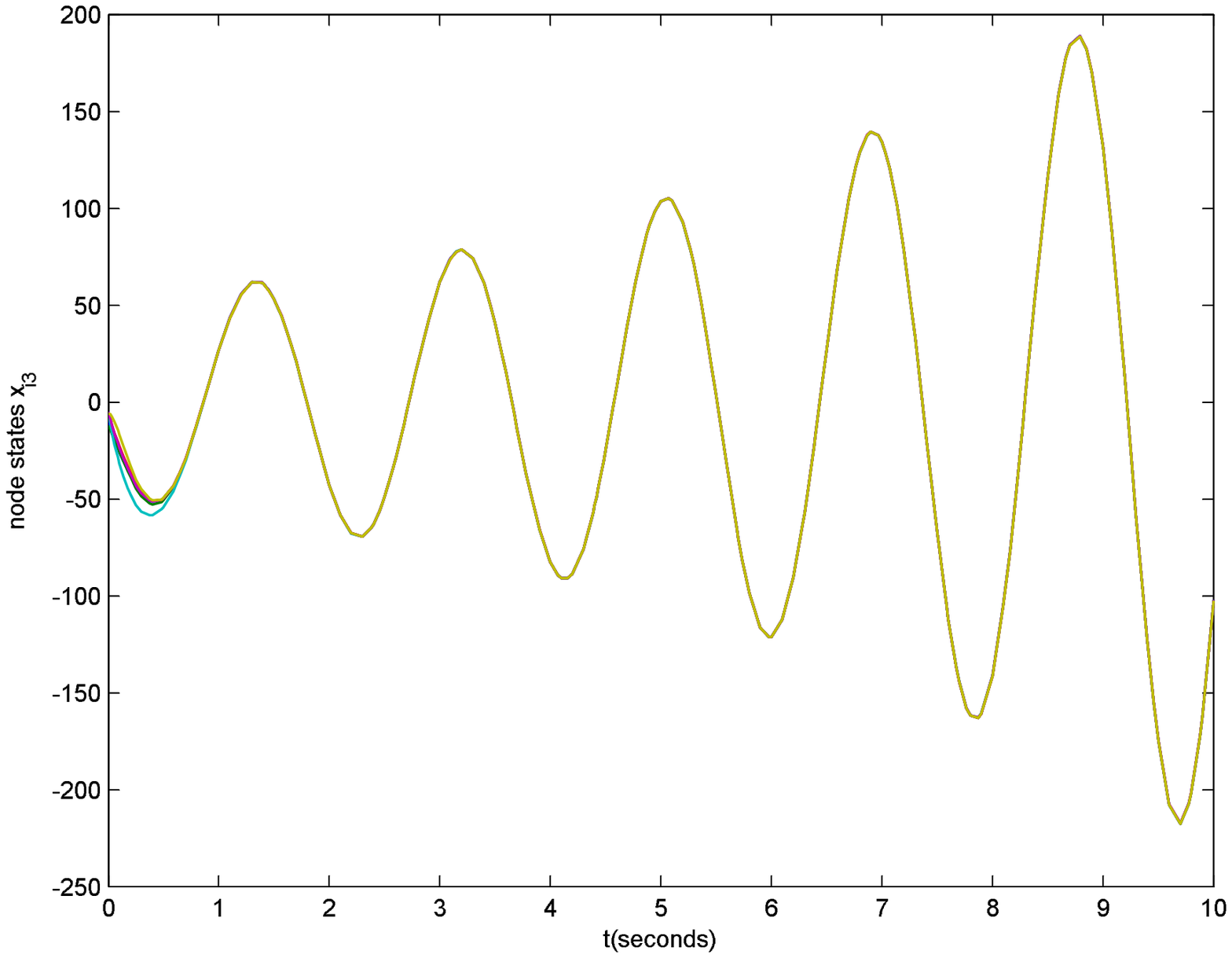}\label{fig:quasistrongly_x3}}
}
\caption{The simulation results under a quasi-strongly connected digraph.}
\label{fig:quasistrongly}
\end{center}
\end{figure}
\section{CONCLUSIONS}
The edge Laplacian of digraph and its related concepts were originally proposed in this paper. Based on these graph-theoretic tools, we developed \ul{a} new systematic framework to study multi-agent system in the context of the edge agreement. To show how the edge Laplacian sheds a new light on the leaderless consensus problem, the technical challenges caused by the unknown but bounded disturbances and the inherently nonlinear dynamics were considered\ul{;} and the classical ISS nonlinear control methods together with the recently developed cyclic-small-gain theorem were successfully implemented to drive multi-agent system to reach robust consensus. Furthermore, the edge-interconnection graph, which plays an important role in the analysis and synthesis of multi-agent networks, was proposed and its intricate relationship with the original graph was discussed. For the quasi-strongly connected case, we also pointed out a reduced order modeling for the edge agreement in terms of the spanning tree subgraph. Based on this observation, by guaranteeing the ISS properties of each subsystem and assigning the appropriate gains for both of the interconnected subsystems to satisfy the small gain condition, the closed-loop multi-agent system could reach robust consensus. Under the edge agreement framework, we believe nonlinear multi-agent systems with more complex factors, such as switching topologies and time-delays, will be well settled.

\ack This work was supported by the National Natural Science Foundation of China under grant 61403406.

\bibliographystyle{plain}                                                               
\bibliography{IJRNCconsensus}

\begin{thebibliography}{10}

\bibitem{bauso2009consensus}
Dario Bauso, Laura Giarr{\'e}, and Raffaele Pesenti.
\newblock Consensus for networks with unknown but bounded disturbances.
\newblock {\em SIAM Journal on Control and Optimization}, 48(3):1756--1770,
  2009.

\bibitem{beard2002coordinated}
Randal~W Beard, Timothy~W McLain, Michael~A Goodrich, and Erik~P Anderson.
\newblock Coordinated target assignment and intercept for unmanned air
  vehicles.
\newblock {\em Robotics and Automation, IEEE Transactions on}, 18(6):911--922,
  2002.

\bibitem{ben2003generalized}
A~Ben-Israel and TNE Greville.
\newblock Generalized inverses: theory and applications, 1974.
\newblock {\em Willey, New York}, 2003.

\bibitem{das2011cooperative}
Abhijit Das and Frank~L Lewis.
\newblock Cooperative adaptive control for synchronization of second-order
  systems with unknown nonlinearities.
\newblock {\em International Journal of Robust and Nonlinear Control},
  21(13):1509--1524, 2011.

\bibitem{godsil2001algebraic}
Christopher~David Godsil, Gordon Royle, and CD~Godsil.
\newblock {\em Algebraic graph theory}, volume 207.
\newblock Springer New York, 2001.

\bibitem{isidori1999nonlinear}
Alberto Isidori.
\newblock {\em Nonlinear Control Systems II}, volume~2.
\newblock Springer, 1999.

\bibitem{jie2012containment}
Mei Jie, Ren Wei, and Ma~Guangfu.
\newblock Containment control for multiple unknown second-order nonlinear
  systems under a directed graph based on neural networks.
\newblock In {\em Control Conference (CCC), 2012 31st Chinese}, pages
  6450--6455. IEEE, 2012.

\bibitem{li2012global}
Zhongkui Li, Xiangdong Liu, Mengyin Fu, and Lihua Xie.
\newblock Global $h_\infty$ consensus of multi-agent systems with lipschitz
  non-linear dynamics.
\newblock {\em Control Theory \& Applications, IET}, 6(13):2041--2048, 2012.

\bibitem{lin2005necessary}
Zhiyun Lin, Bruce Francis, and Manfredi Maggiore.
\newblock Necessary and sufficient graphical conditions for formation control
  of unicycles.
\newblock {\em Automatic Control, IEEE Transactions on}, 50(1):121--127, 2005.

\bibitem{lin2007state}
Zhiyun Lin, Bruce Francis, and Manfredi Maggiore.
\newblock State agreement for continuous-time coupled nonlinear systems.
\newblock {\em SIAM Journal on Control and Optimization}, 46(1):288--307, 2007.

\bibitem{liu2011lyapunov}
Tengfei Liu, David~J Hill, and Zhong-Ping Jiang.
\newblock Lyapunov formulation of iss cyclic-small-gain in continuous-time
  dynamical networks.
\newblock {\em Automatica}, 47(9):2088--2093, 2011.

\bibitem{liu2013distributed}
Tengfei Liu and Zhong-Ping Jiang.
\newblock Distributed formation control of nonholonomic mobile robots without
  global position measurements.
\newblock {\em Automatica}, 49(2):592--600, 2013.

\bibitem{liu2013outputfeedback}
Tengfei Liu and Zhong-Ping Jiang.
\newblock Distributed output-feedback control of nonlinear multi-agent systems.
\newblock {\em Automatic Control, IEEE Transactions on}, 58(11):2912--2917,
  2013.

\bibitem{mei2013distributed}
Jie Mei, Wei Ren, and Guangfu Ma.
\newblock Distributed coordination for second-order multi-agent systems with
  nonlinear dynamics using only relative position measurements.
\newblock {\em Automatica}, 49(5):1419--1427, 2013.

\bibitem{mesbahi2010graph}
Mehran Mesbahi and Magnus Egerstedt.
\newblock {\em Graph theoretic methods in multiagent networks}.
\newblock Princeton University Press, 2010.

\bibitem{meyer2000matrix}
Carl~D Meyer.
\newblock {\em Matrix analysis and applied linear algebra}, volume~2.
\newblock Siam, 2000.

\bibitem{nedic2009distributed}
Angelia Nedic and Asuman Ozdaglar.
\newblock Distributed subgradient methods for multi-agent optimization.
\newblock {\em Automatic Control, IEEE Transactions on}, 54(1):48--61, 2009.

\bibitem{olfati2007distributed}
Reza Olfati-Saber.
\newblock Distributed kalman filtering for sensor networks.
\newblock In {\em Decision and Control, 2007 46th IEEE Conference on}, pages
  5492--5498. IEEE, 2007.

\bibitem{olfati2004consensus}
Reza Olfati-Saber and Richard~M Murray.
\newblock Consensus problems in networks of agents with switching topology and
  time-delays.
\newblock {\em Automatic Control, IEEE Transactions on}, 49(9):1520--1533,
  2004.

\bibitem{ren2009distributed}
Wei Ren.
\newblock Distributed leaderless consensus algorithms for networked
  euler--lagrange systems.
\newblock {\em International Journal of Control}, 82(11):2137--2149, 2009.

\bibitem{ren2005consensus}
Wei Ren, Randal~W Beard, et~al.
\newblock Consensus seeking in multiagent systems under dynamically changing
  interaction topologies.
\newblock {\em IEEE Transactions on Automatic Control}, 50(5):655--661, 2005.

\bibitem{shi2009global}
Guodong Shi and Yiguang Hong.
\newblock Global target aggregation and state agreement of nonlinear
  multi-agent systems with switching topologies.
\newblock {\em Automatica}, 45(5):1165--1175, 2009.

\bibitem{sontag2008input}
Eduardo~D Sontag.
\newblock Input to state stability: Basic concepts and results.
\newblock In {\em Nonlinear and optimal control theory}, pages 163--220.
  Springer, 2008.

\bibitem{tanner2004leader}
Herbert~G Tanner, George~J Pappas, and Vijay Kumar.
\newblock Leader-to-formation stability.
\newblock {\em Robotics and Automation, IEEE Transactions on}, 20(3):443--455,
  2004.

\bibitem{thulasiraman2011graphs}
Krishnaiyan Thulasiraman and Madisetti~NS Swamy.
\newblock {\em Graphs: theory and algorithms}.
\newblock John Wiley \& Sons, 2011.

\bibitem{wang2012second}
Xiangke Wang, Tengfei Liu, and Jiahu Qin.
\newblock Second-order consensus with unknown dynamics via cyclic-small-gain
  method.
\newblock {\em Control Theory \& Applications, IET}, 6(18):2748--2756, 2012.

\bibitem{wang2012formation}
Xiangke Wang, Jiahu Qin, Xun Li, and Zhiqiang Zheng.
\newblock Formation tracking for nonlinear agents with unknown second-order
  locally lipschitz continuous dynamics.
\newblock In {\em Control Conference (CCC), 2012 31st Chinese}, pages
  6112--6117. IEEE, 2012.

\bibitem{wang2014coordination}
Xiangke Wang, Jiahu Qin, and Changbin Yu.
\newblock Iss method for coordination control of nonlinear dynamical agents
  under directed topology.
\newblock {\em IEEE Transactions on Cybernetics}, 14, 2014.

\bibitem{wen2012consensus}
Guanghui Wen, Zhisheng Duan, Zhongkui Li, and Guanrong Chen.
\newblock Consensus and its $l_2$-gain performance of multi-agent systems with
  intermittent information transmissions.
\newblock {\em International Journal of Control}, 85(4):384--396, 2012.

\bibitem{yu2010second}
Wenwu Yu, Guanrong Chen, Ming Cao, and J{\"u}rgen Kurths.
\newblock Second-order consensus for multiagent systems with directed
  topologies and nonlinear dynamics.
\newblock {\em Systems, Man, and Cybernetics, Part B: Cybernetics, IEEE
  Transactions on}, 40(3):881--891, 2010.

\bibitem{zelazo2011edge}
Daniel Zelazo and Mehran Mesbahi.
\newblock Edge agreement: Graph-theoretic performance bounds and passivity
  analysis.
\newblock {\em Automatic Control, IEEE Transactions on}, 56(3):544--555, 2011.

\bibitem{zelazo2011graph}
Daniel Zelazo and Mehran Mesbahi.
\newblock Graph-theoretic analysis and synthesis of relative sensing networks.
\newblock {\em Automatic Control, IEEE Transactions on}, 56(5):971--982, 2011.

\bibitem{zelazo2007agreement}
Daniel Zelazo, Amirreza Rahmani, and Mehran Mesbahi.
\newblock Agreement via the edge laplacian.
\newblock In {\em Decision and Control, 2007 46th IEEE Conference on}, pages
  2309--2314. IEEE, 2007.

\bibitem{zelazo2013performance}
Daniel Zelazo, Simone Schuler, and Frank Allg{\"o}wer.
\newblock Performance and design of cycles in consensus networks.
\newblock {\em Systems \& Control Letters}, 62(1):85--96, 2013.

\bibitem{zeng2014nonlinear}
Zhiwen Zeng, Xiangke Wang, and Zhiqiang Zheng.
\newblock Nonlinear consensus under directed graph via the edge laplacian.
\newblock In {\em Control and Decision Conference (2014 CCDC), The 26th
  Chinese}, pages 881--886. IEEE, 2014.

\bibitem{zhu2010leader}
Wei Zhu and Daizhan Cheng.
\newblock Leader-following consensus of second-order agents with multiple
  time-varying delays.
\newblock {\em Automatica}, 46(12):1994--1999, 2010.

\end{thebibliography}

\end{document}